\newtheorem{remark}{Remark}
\newtheorem{theorem}{Theorem}
\newtheorem{lemma}{Lemma}
\newtheorem{corollary}{Corollary}
\newtheorem{assumption}{Assumption}
\newtheorem{definition}{Definition}
\let \sss=\scriptscriptstyle
\begin{document}

\title{
Local Topology Inference of Mobile Robotic Networks under Formation Control}
\author{Yushan Li$^{\dag}$, Jianping He$^{\dag}$, Lin Cai$^{\ddag}$ and Xinping Guan$^{\dag}$
	\thanks{
	${\dag}$: The Dept. of Automation, Shanghai Jiao Tong University, and Key Laboratory of System Control and Information Processing, Ministry of Education of China, Shanghai, China. E-mail address: \{yushan\_li, jphe, xpguan\}@sjtu.edu.cn. 

	${\ddag}$: The Dept. of Electrical and Computer Engineering, University of Victoria, BC, Canada. Email address: cai@ece.uvic.ca.

	Preliminary results have been presented at 2021 European Control Conference \cite{lys}.
	}%
}

\maketitle


\begin{abstract}
The interaction topology is critical for efficient cooperation of mobile robotic networks (MRNs). 
We focus on the local topology inference problem of MRNs under formation control, 
\textcolor{black}{where an inference robot with limited observation range can manoeuvre among the formation robots. }
This problem faces new challenges brought by the highly coupled influence of unobservable formation robots, inaccessible formation inputs, and unknown interaction range. 
\textcolor{black}{The novel idea here is to advocate a range-shrink strategy to perfectly avoid the influence of unobservable robots while filtering the input. 
To that end, we develop consecutive algorithms to determine a feasible constant robot subset from the changing robot set within the observation range, and estimate the formation input and the interaction range. 
Then, an ordinary least squares based local topology estimator is designed with the previously inferred information.} 
Resorting to the concentration measure, we prove the convergence rate and accuracy of the proposed estimator, taking the estimation errors of previous steps into account. 
\textcolor{black}{Extensions on nonidentical observation slots and more complicated scenarios are also analyzed. 
Comprehensive simulation tests and method comparisons corroborate the theoretical findings.}
\end{abstract}

\section{Introduction}
Mobile robotic networks (MRNs) have received increasing attention in the last decades. 
Thanks to the mobility, flexibility, and distributed fashion, MRNs are widely deployed, e.g., surveillance, reconnaissance, search and environmental monitoring \cite{bullo2009distributed}. 
Among these applications, formation control serves as a fundamental technique to enhance the cooperation performance by maintaining a preset geometric shape \cite{olfati2004consensus}. 
\textcolor{black}{
Numerous methods have been proposed to obtain stable and robust formation control, see \cite{oh2015survey,kamel2020formation} for a detailed review.} 
Despite the large variety of the control methods, the interaction topology among robots is universal and critical for effective cooperation of MRNs. 
The topology characterizes the locality of information exchange, and determines the shape-forming stability and convergence. 

\textcolor{black}{
Recent years have witnessed the emergence of many applications that necessitate advances in topology inference, which brings significant benefits in better understanding the system behaviors. 
Taking MRNs as the specific object, there are mainly two types of applications. 
First, from the security perspective, external attackers can utilize the topology inference method to find the critical robot that has significant control impacts in the formation, e.g., calculating the node degree and centrality \cite{zhang2014analysis}, or identifying the leadership relationship in the formation \cite{vasquez2018network}. 
With the topology information, more intelligent interception or herding tasks in military scenarios can be performed to control the formation \cite{choi2018detecting,li2019learning,licitra2019single,li2020unpredictable}. 
Second, from the perspective of performance improvement, inferring the topology of formation can support the self-configuration ability of MRNs \cite{venkitaraman2020recursive}. 
For instance, when a robot disconnects with others, it can use the inferred local topology to keep coordination with the formation, by predicting the state and reconnecting with appropriate neighboring robots \cite{li2019optimal}. }

Mathematically, topology inference can be seen as a typical inverse modeling problem. 
Plenty of related works have been developed for various dynamic models \cite{deka2016learning,shi2019bayesian,lu2019nonparametric,dong2019learning}. 
In relation to the basic consensus dynamics, the interaction topology is reconstructed by measuring the power spectral density of the network response to input noises, and node removal strategies are designed \cite{shahrampour2013reconstruction,shahrampour2014topology}. 
For sparsely connected dynamical networks, eigenvalue decomposition-based optimization methods in \cite{hassan2016topology,mateos2019connecting} are proposed to reconstruct the topology. 
\cite{bazanella2019network,van2021topology} investigate the identifiability conditions of the system topology of a class of heterogeneous dynamical networks, from the perspective of characterizing the system transfer matrix from input to output.
\textcolor{black}{
Despite the fruitful results, these methods cannot handle the topology inference of MRNs under formation control. 
For example, many well-established techniques are effective when the system is asymptotically stable and only involves zero-mean noises input \cite{matta2018consistent}, or the input is known \cite{coutino2020state,8985069}. 
Nevertheless, in practical formation control, the input is generally regular, the system can be marginally stable, and the state is not always fully observable by external observers. 
In a word, careful treatments of the formation input, interaction characteristics and observation limitations are still lacking. }

\textcolor{black}{
To fill the gap, this paper focuses on the local topology inference problem of MRNs under first-order linear formation control, 
where an inference robot can manoeuvre among the formation robots and observe their motions. 
Specifically, the inference robot has no knowledge of the formation inputs and interaction parameters, and the observation range is strictly limited. 
This problem is challenging due to three aspects. 
First, the set of robots within the observation range of the inference robot can change over time. 
Second, the movement of formation robots heavily depends on the unknown formation input and interaction constraints. 
Third, the state evolution of the observable robot subset is determined by not only itself but also the unobservable robots. 
It is quite difficult to decouple the influences of the mixed three factors, and obtain a reliable local topology from the noise-corrupted observations. 
To address these issues, the key insight is to determine an available robot set from the changing observable robot set, and eliminate the influence of the unobservable robots. 
Then, we need to filter the influence of the formation input from local observations and design an unbiased topology estimator. }

Preliminary results about estimator design with known interaction range have appeared in \cite{lys}. 
In this paper, we consider a more general situation where the interaction range is unknown, and extend the analysis by i) further estimating the unknown interaction range, ii) designing algorithms to determine the feasible robot set for inference, and iii) adding conjoint inference error analysis of the former two factors. 
The main contributions are summarized as follows. 

\begin{itemize}
\item We investigate the local topology inference problem of MRNs under noisy observations, {\color{black}{without the knowledge about the formation input and interaction parameters. 
By characterizing the steady formation pattern, we determine a constant subset from the time-varying set of robots within the observation range, and identify the formation input parameters. }}
The estimation error bound under finite observations is established in probability.

\item {\color{black}{
Leveraging the interaction constraints between formation robots, we develop an active excitation based method to obtain a reliable estimate of the interaction range. 
Combining the novel range-shrink strategy and the monotonicity analysis of the interaction range, the influence of unobservable robots is perfectly avoided. 
Then, an ordinary least squares (OLS) based local topology estimator is established after filtering the formation input's influence on observations before the steady stage. }}

\item 
The convergence and accuracy of the proposed estimator are proved, by resorting to the concentration measure with probability guarantees. 
\textcolor{black}{
Extensions on nonidentical observation slots of the robots and on more complicated control models are also discussed and analyzed. 
Simulation studies and comparison tests illustrate the effectiveness of the proposed method.  }
\end{itemize}

\textcolor{black}{
This paper reveals the possibility of inferring the local topology of MRNs under first-order linear formation control protocols, without knowledge about the formation input and interaction parameters. 
The achieved results provide insights to tackle more complicated and general scenarios, and also necessitate the investigation of interaction security of MRNs. 
}

The remainder of this paper is organized as follows. 
Section \ref{r-work} presents related literature. 
Section \ref{preliminary} gives the modeling for MRNs and formulates the inference problem. 
Section \ref{revealing} studies how to identify the steady pattern and interaction range. 
Section \ref{sec:inference-estimation} develops the design of the local topology estimator and analyzes the inference performance. 
Simulation results are shown in Section \ref{simulation}, 
followed by the concluding remarks and further research issues in Section \ref{conclusion}. 
All the proofs of theorems are provided in the Appendix.

\section{Related Work}\label{r-work}
\textit{Formation control in MRNs}. 
The fundamental rules for formation control were first introduced by the famous Reynolds' Rules \cite{reynolds1987flocks}: separation, alignment, and cohesion. 
Based on the rules, numerous methods have been proposed to achieve the desired performance, and consensus-based algorithms have become the mainstream, e.g., \cite{sun2016optimal,zhao2018affine,alonso2019distributed,xu2020affine}. 
The key idea of consensus-based algorithms is that the formation is modeled as a graph, and every robot exchanges information (positions and velocities) with its neighbors and computes its control inputs. 
Therefore, the interaction structure lays critical support for effective formation control and \textcolor{black}{is largely affected by communication network}. 
In recent years, communication-free formation control \cite{deghat2014localization,cheng2017event,trinh2018bearing} has been developed and attracts research interests, thanks to the fast advancement of sensing technologies. 
Communication-free interaction avoids information delays and network bandwidth consumption, and even enables stealth modes of operation \cite{kan2011network}. 
For instance, formation control with bearing measurements by vision sensors was investigated in \cite{zhao2019bearing}. 
\textcolor{black}{Note that in all cases, the interaction range is restricted by the physical distance between robots} due to the energy constraints, i.e., two distant robots outside the interaction range are disconnected. 

\textit{Topology Inference}. 
A large body of research concerning topology inference has been developed in the literature. 
\cite{granger1969investigating,brovelli2004beta} used Granger causality to formulate the directionality of the information exchange among system nodes, and constructed corresponding estimators to infer the underlying topology. 
\textcolor{black}{
Identifying the topology of sparsely connected networks via compressed sensing is also commonly investigated \cite{timme2007revealing,wang2011network,hayden2016sparse,wai2019joint}, which is transformed to a constrained $L_1$ norm optimization problem based on limited observations. 
Considering the latent regularity in the time series of nodal observations and adopting some basic assumptions (e.g., smoothness), graph signal processing methods \cite{mei2015signal,onuki2016graph,egilmez2017graph,hallac2017network,pasdeloup2018characterization} are proposed to derive a topology interpretation for the causation or correlation between nodes. 
}
When the network dynamics are nonlinear, kernel-based methods were developed to effectively infer the topology \cite{karanikolas2016multi,karanikolas2017multi,wang2018inferring}. 
The key idea is to select appropriate kernel functions to approximate the nonlinearities, where the performance is mainly determined by the kernel design. 
\textcolor{black}{Several works \cite{vasquez2018network,8985069} have directly considered inferring the topology of MRNs, but they still lack performance guarantees, especially when the knowledge about the formation input is unavailable. }

In summary, most existing works cannot directly infer the topology of MRNs under formation control, due to the unknown formation input and interaction characteristics. 
Despite many attempts on the asymptotic inference performance, there is no analytical model for the inference error under finite observations. 
These challenges motivate this paper.

\section{Preliminaries and Problem Formulation}\label{preliminary}
Let $\mathcal{G}=(\mathcal{V},\mathcal{E})$ be a directed graph that models an MRN, where $\mathcal{V}=\{1, \cdots,n\}$ is a finite set of nodes (i.e., robots) and $\mathcal{E}\subseteq \mathcal{V}\times \mathcal{V}$ is the set of interaction edges. 
An edge $(i,j)\in \mathcal{E}$ indicates that $i$ will use the information from $j$. 
The adjacency matrix $A=[a_{ij}]_{n \times n}$ of $\mathcal{G}$ is defined such that ${a}_{ij}\!>\!0$ if $(i,j)$ exists, and ${a}_{ij}\!=\!0$ otherwise. 
Denote ${\mathcal{N}_i^{in}}=\{j\in \mathcal{V}:a_{ij}>0\}$ and ${\mathcal{N}_i^{out}}=\{j\in \mathcal{V}:a_{ji}>0\}$ as the in-neighbor and  out-neighbor sets of $i$, respectively. 

Throughout the paper, we use the scripts $\tilde \cdot $ and  $\hat \cdot$ right above a variable to indicate the corresponding observation and estimator, respectively. 
We denote by $\| \cdot \|$ the spectral norm and by $\| \cdot \|_{F}$ the Frobenius norm of a matrix. 
Denote $\bm{0}$ by all-zero matrix and $\bm{1}$ by all-one matrix in compatible dimensions. 
The set variables are expressed in capital calligraphy fonts, and $\mathcal{V}_a \backslash \mathcal{V}_b$ represents the elements in $\mathcal{V}_a$ that are not in $\mathcal{V}_b$. 
The two-dimension state of a robot is expressed in boldface font (e.g., ${\mathbf z}$). 
Unless otherwise noted, the formulation with non-boldface state variables applies to the robot state in each dimension independently. 
For square matrices $M_a$ and $M_b$ in the same dimensions, ${M_a}\!\succeq\!{M_b}$ (${M_a}\!\preceq\!{M_b}$) means ${M_a}-{M_b}$ is positive-semidefinite (negative-semidefinite). 
For two real-valued functions $f_1$ and $f_2$, $f_1(x)=\bm{O}(f_2(x))$ as $x\to x_0$ means $\mathop {\lim }\nolimits_{x \to x_0 } |f_1(x)/f_2(x)|<\infty$, and $f_1(x)=\bm{o}(f_2(x))$ as $x\to x_0$ means $\mathop {\lim }\nolimits_{x \to x_0 } |f_1(x)/f_2(x)|=0$. 
\textcolor{black}{Some important symbols are summarized in Table \ref{tab:test}. }

\begin{table}[t]
\small
\centering
 \caption{\label{tab:test}Some Important Notation Definitions} 
 \begin{tabular}{cl}
 \toprule 
  Symbol  & Definition  \\ 
  \midrule
  $r_a$, $r_i$ & the abbreviation of the inference robot, robot $i$\\
  $z^a_k$, $z^i_k$ & the state of $r_a$, $r_i$ at time $k$\\
  ${\mathbf z}_k^{a}$, ${\mathbf z}_k^{i}$ & the two-dimensional position of $r_a$, $r_i$ at time $k$ \\
  $c$ & the desired velocity of formation robots\\
  $h$ & the shape configuration vector of formation robots\\
  $k_s$ & the time when $\epsilon$-steady pattern is reached\\
  $k_{end}$ & the time when $r_a$ stops observation\\
  $\mathcal{V}_{\sss F}^{a}(k)$ & the robot set within $r_a$'s observation range at time $k$ \\  
  $\mathcal{V}_{\sss F}$ & the constant robot subset observed by $r_a$\\  
  $\mathcal{V}_{\sss H}$ & the robot subset by range-shrink strategy ($\mathcal{V}_{\sss H}\subseteq\mathcal{V}_{\sss F}$)\\ 
  $z^{\sss F}_k$, $z^{\sss H}_k$ & the state vector of robot set $\mathcal{V}_{\sss F}$, $\mathcal{V}_{\sss F}$ at time $k$\\
  $W$ & the interaction topology matrix among the formation\\
  $W_{\sss HF}$ & the interaction topology matrix between $\mathcal{V}_{\sss H}$ and $\mathcal{V}_{\sss F}$\\
  $R_f$ & the observation range of $r_a$\\
  $R_c$ & the interaction range of formation robots\\
  $R_o$ & the obstacle detection radius of formation robots\\
  $X$ & the matrix of $k_s$ filtered observations about $\mathcal{V}_{\sss F}$\\ 
  $Y$ & the matrix of $k_s$ filtered observations about $\mathcal{V}_{\sss H}$\\ 
  \bottomrule 
 \end{tabular} 
\end{table}

\subsection{Formation Control} \label{s2-c}
To describe the predefined geometric shape under formation control, 
\textcolor{black}{the shape vector $h_0=[h_0^1,\cdots,h_0^n]^{\mathsf{T}}$ is introduced, where $h_0^i(i\in\mathcal{V})$ is the desired relative deviation between robot $i$ (abbreviated to $r_i$ hereafter) and a common reference point.} 
To achieve this pattern, a common first-order discrete consensus-based controller is given by \cite{olfati2007consensus}
\begin{equation}\label{eq-1}
z^{i}(t_{k+1})=z(t_k)+\varepsilon_{\sss T} \sum\limits_{j \in \mathcal{N}_i^{in}} {{a_{ij}}(z^j(t_k)-z^i(t_k)-h_0^{ij})},
\end{equation}
where $h_0^{ij}=h_0^j-h_0^i$ is desired state deviation between $j$ and $i$, and $\varepsilon_{\sss T}=t_{k+1}-t_{k}$ is the control period satisfying $\varepsilon_{\sss T}\le 1/\max\{d_i: i\in\mathcal{V}\}$. 
Note that once the formation shape is specified, the choice of the reference point will make no difference as $h_0^{ij}$ remains unchanged. 

Generally, to dynamically guide the formation motion, one robot will be specified as the leader with an extra velocity input. 
For simplicity and without loss of generality, $r_{n}$ is taken as the leader and reference node, and suppose that it runs in a constant velocity $c_0$. 
\textcolor{black}{Let $L=\text{diag}\{A \bm{1}_n\}-A$ be the Laplacian matrix of $\mathcal{G}$}, and denote $u_0=Lh_0+[0,\cdots,0, c_0]^\mathsf{T}$. 
Then, the global dynamics of the system is described by 
\begin{equation}\label{eq:original-global}
z(t_{k+1})= ( {I}_{n}-\varepsilon_{\sss T} L) z(t_k)+ \varepsilon_{\sss T}u_0 {\buildrel \Delta \over =} W z(t_k)+ \varepsilon_{\sss T}u_0,  
\end{equation}
where $W$ equivalently represents the original topology matrix $A$ and is known as Perron matrix. 
Apparently, $W$ is row-stochastic, i.e., $W\bm{1}_{n}=\bm{1}_{n}$. 
For ease of notation, we denote $z_{k} {\buildrel \Delta \over =} z(t_k)$, $c{\buildrel \Delta \over =}\varepsilon_{\sss T}c_0$, $h{\buildrel \Delta \over =}\varepsilon_{\sss T}h_0$ and $u{\buildrel \Delta \over =}\varepsilon_{\sss T}u_0$ in following sections. 
Then, (\ref{eq:original-global}) is rewritten as 
\begin{equation}\label{eq:global-system}
z_{k+1}= W{z_k}+ u. 
\end{equation}
We make the following assumption throughout this paper. 

\begin{assumption}[System stability]\label{assum:stability}
The eigenvalue 1 of $W$ is simple \textcolor{black}{(i.e., its algebraic multiplicity equals one)}, and the magnitudes of all other eigenvalues are less than one. 
\end{assumption}


\subsection{Obstacle-avoidance and Interaction Constraints}
The obstacle-avoidance mechanism is critical for MRNs to interact with the physical environment. 
Denote by $R_o$ the the obstacle detection range, and by \textcolor{black}{$u_{k}^{j,e}$ the input triggered by the excitation source (i.e., the obstacle $r_{ob}$) on $r_j$. 
Once the relative distance between $r_j$ and $r_{ob}$ satisfies $\|\mathbf{z}^j-\mathbf{z}^{ob}\|\le R_o$, the state of $r_j$ is updated by 
\begin{align}\label{eq:obstacle-rule}
z_{k+1}^{j,e}=\sum\limits_{\ell \in \mathcal{V}} {{w_{j\ell}}(z_k^{\ell}-z_k^j)} + u_k^{j} + u_{k}^{j,e},
\end{align}
where the first two terms on the right hand side (RHS) can be seen as the internal interaction within the MRN}, while the last term represents the external interaction with the environment. 

There are numerous obstacle-avoidance algorithms in the literature (e.g., \cite{pandey2017mobile} provides a detailed review), and among them, $u_{k}^{j,e}$ is mainly determined by the desired goal state, the relative state and velocity between $r_{j}$ and $r_{ob}$. 
As long as the excitation source appears within the obstacle-detection range of $r_j$, there will always be a $u_k^{j,e}\neq0$. 
In this work, we do not specify the detailed form of $u_{k}^{j,e}$, but mainly leverage the obstacle-avoidance property that  
\begin{equation}\label{eq:obstacle-property}
|u_{k}^{j,e}|>0,~\text{if}~\|\mathbf{z}^j-\mathbf{z}^{ob}\|\le R_o. 
\end{equation}
In practical applications, the interaction capability of robots is limited due to the energy constraint, and thus the interaction range among robots (denoted by $R_c$) is bounded \cite{bullo2009distributed}, satisfying 
\begin{equation}\label{eq:interaction-constraint}
R_o<R_c<\infty.
\end{equation}

\begin{figure*}[t]
\centering
\setlength{\abovecaptionskip}{0.1cm}
\includegraphics[width=0.8\textwidth]{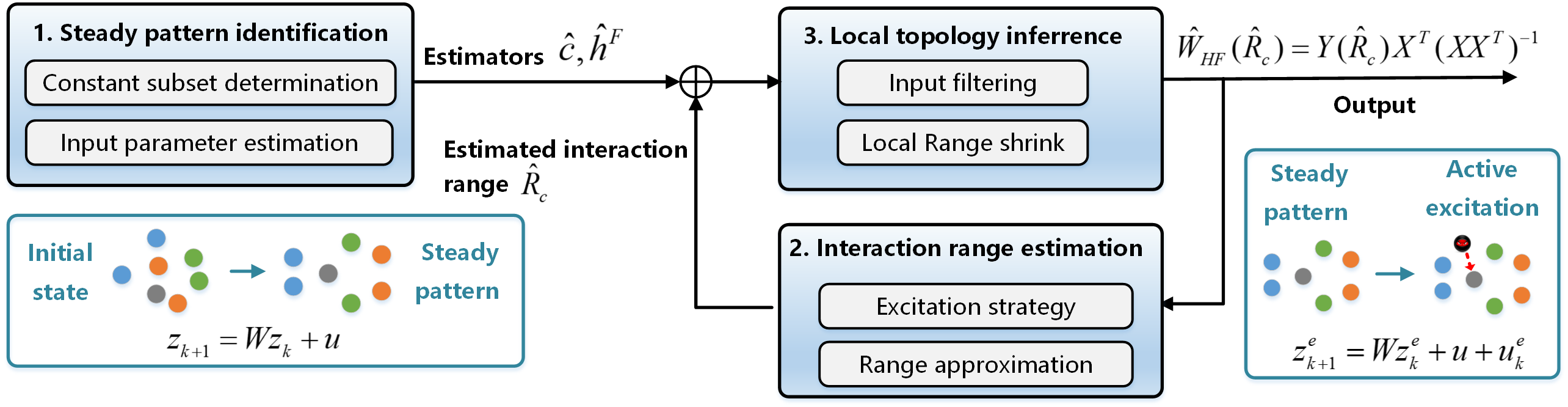}
\caption{The proposed local topology inference method. 
First, the inference robot $r_a$ uses the collected observations over the MRN to estimate the formation input parameters $c$ and $h^{\sss F}$. 
Then, $r_a$ makes active excitations on a target robot in the MRN to estimate the interaction range between two robots. 
Finally, based on the estimated information, $r_a$ can filter the influence of the unobservable part and determine the shrunken range. 
Specifically, the inferred topology can be leveraged in turn to approximate the best shrink and infer a new local topology. 
}
\label{fig:frame}
\end{figure*}

\subsection{Problem of Interest}

\textcolor{black}{
Suppose an inference robot (denoted by $r_a$) can manoeuvre in an MRN described by the formation control model \eqref{eq:global-system}. 
Specifically, $r_a$ is equipped with advanced sensors with a limited observation range, and does not have knowledge about the formation input and interaction parameters. 
Note that both the formation robots and $r_a$ are moving during the whole process, and thus the robots within the observation range of $r_a$ can change over time. 
Let $\mathcal{V}_{\sss F}^{a}(k)\subseteq\mathcal{V}$ be the set of robots within $r_a$'s observation range at time $k$, given by 
\begin{equation}
\mathcal{V}_{\sss F}^{a}(k)=\{ i: \|{\mathbf z}_{k}^{i}-{\mathbf z}_{k}^{a}\|_2 < R_{f} \},
\end{equation} 
where $R_f$ is the observation range of $r_a$. 
Since there can be possible observation inaccuracies brought by the movement of robots, $r_a$'s observation for $i\in\mathcal{V}_{\sss F}^{a}(k)$ is described by 
\begin{equation}
\tilde z_{k}^i=z_{k}^i+\omega_{k}^i,~i\in\mathcal{V}_{\sss F}^{a}(k),
\end{equation}
where $\omega_{k}^i$ is the $i$-th element of \textit{i.i.d.} Gaussian noise vector $\omega_{k}\in\mathbb{R}^{n}$, satisfying $\omega_{k}\sim N(0,{\sigma ^2}I)$. 
Considering the interaction constraint (\ref{eq:interaction-constraint}) in $\mathcal{V}$, we assume that $R_f$ satisfies
\begin{equation}\label{eq:observation-constraint}
R_f\ge R_c,
\end{equation}
where implicates that $r_a$ can observe at least one single robot and all its in-neighbors. }

{\color{black}{
The goal of this paper is to investigate how $r_a$ can infer the local topology of the formation from the observations $\{ \tilde z_{k}^i,~i\in\mathcal{V}_{\sss F}^{a}(k)\}$. 
This problem is very challenging, and most existing methods cannot be directly applied due to three factors: 
i) Time-varying $\mathcal{V}_{\sss F}^{a}(k)$: the observations of robots in $\mathcal{V}_{\sss F}^{a}(k)$ may be discontinuous and insufficient. 
ii) Weak prior knowledge: the unknown formation input and interaction parameters make direct inference from $\{ \tilde z_{k}^i,i\in\mathcal{V}_{\sss F}^{a}(k)\}$ unavailable. 
iii) Limited observation range: the neighbors that send real-time information to $\mathcal{V}_{\sss F}^{a}(k)$ may locate outside the observation range of $r_a$. 
We will address these issues from the following aspects to obtain a reliable local topology inference. 
\begin{itemize}
\item Utilizing the steady pattern of the formation, we first demonstrate how to determine a constant subset $\mathcal{V}_{\sss F}\subseteq\mathcal{V}_{\sss F}^{a}(k)$ as available inference sources, and identify the formation input from corresponding observations. 

\item Since the interaction range between robots is limited, we develop an excitation method to estimate the interaction range, and later use it to improve the local topology inference performance. 

\item Towards the influence of unobservable robots on $\mathcal{V}_{\sss F}$, we propose a novel range-shrink method to guarantee the inferred topology is unbiased in the asymptotic sense. 
\end{itemize}
Based on the above treatments, we finally present the local topology estimator, along with its convergence and accuracy analysis. 
Specifically, the situation that the observation slots for robots in $\mathcal{V}_{\sss F}$ are nonidentical will also be analyzed. 
The whole framework of this paper is shown in Fig.~\ref{fig:frame}. 
}}

\section{Estimating the Steady Pattern and the Interaction Range}\label{revealing}
In this section, 
\textcolor{black}{we first demonstrate how to determine a constant subset $\mathcal{V}_{\sss F}$ from $\mathcal{V}_{\sss F}^{a}(k)$ and identify the formation input. }
Then, we present the range-shrink idea by introducing a common truncated estimator. 
\textcolor{black}{Finally, the excitation strategy for estimating the interaction range is provided}

{\color{black}{
\subsection{Determining Constant Robot Subset $\mathcal{V}_{\sss F}$}

Suppose the MRN starts the formation task from an arbitrary initial state. 
Given the initial position of $r_a$, $r_a$ needs to manoeuvre among the formation robots and avoid collisions with them, namely, keeping $\|{\mathbf z}_{k}^{a}\!-\!{\mathbf z}_{k}^{i}\|_2 \!>\!R_o, i\in\mathcal{V}_{\sss F}^a(k)$. 
This can be easily achieved by making $r_a$ not too close to the robots and track the formation velocity, e.g., setting 
\begin{equation}\label{eq:movement}
u^a_k=\sum\nolimits_{i\in \mathcal{V}_{\sss F}^{a}(k) } ( z_k^i-z_{k-1}^i )/|\mathcal{V}_{\sss F}^{a}(k)| + g_a(\mathcal{V}_{\sss F}^{a}(k)),
\end{equation}
where the first sum term is for formation tracking, and $g_a(\mathcal{V}_{\sss F}^{a}(k))$ represents the adjusting input when $r_a$ is too close to some robots. 
Note that any strategy that meets the above requirement can be adopted by $r_a$. 
Then, we focus on how to infer the local topology from $r_a$'s observations in this process. 
}}

Since the steady pattern of the MRN reflects the formation shape and moving speed of the MRN, 
we first characterize the steady pattern by introducing the notion of linear steady trajectory, and determine the subset $\mathcal{V}_{\sss F}$ to be inferred. 
\begin{definition}[Linear steady trajectory]\label{def:steady-pattern}
Given the dynamic system (\ref{eq:global-system}), its state evolution $\{z_{k}\}$ is subject to linear steady trajectory if there exists unique $c\in\mathbb{R}$ and $s\in\mathbb{R}^{n}$ such that 
\begin{equation}
z_{k}=ck\bm{1}_n + s. 
\end{equation}
\end{definition}
\textcolor{black}{By referring to the Theorem 1 in our preliminary work \cite{lys}, we have the following result about the steady trajectory. }
\begin{lemma}\label{le:ep-convergence}
By the constant controller $u=Lh+[0\cdots0 \;c]^\mathsf{T}$, the system (\ref{eq:global-system}) will approximate the linear steady trajectory with arbitrary precision, i.e., 
given an arbitrary $\epsilon>0$, there always exists a $k_0\in\mathbb{N}^{+}$ and a unique $s\in\mathbb{R}^{n}$, such that 
\begin{equation}\label{eq:steady-state}
\|z_k-{c} k \bm{1}_n -s \|_1<\epsilon, \forall k\ge{k_0}. 
\end{equation}
\end{lemma}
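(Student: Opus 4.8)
The plan is to derive a closed form for $z_k$ from the spectral structure of $W$ and then bound its deviation from a pure ramp. First I would record what Assumption~\ref{assum:stability} and row-stochasticity give: $\bm{1}_n$ is the right eigenvector for the simple eigenvalue $1$, and with $v$ the associated left eigenvector normalized by $v^{\mathsf{T}}\bm{1}_n=1$, the spectral projector onto the $1$-eigenspace is $P=\bm{1}_n v^{\mathsf{T}}$, satisfying $WP=PW=P^2=P$. Because $r_n$ is a leader moving at constant velocity, row $n$ of $L$ (equivalently of $I-W$) is zero, so $e_n^{\mathsf{T}}W=e_n^{\mathsf{T}}$ and hence $v=e_n$; since moreover $v^{\mathsf{T}}L=0$ (as $v^{\mathsf{T}}W=v^{\mathsf{T}}$ and $W=I-\varepsilon_{\sss T}L$) and $u=Lh+c\,e_n$, this yields the crucial identity $Pu=(v^{\mathsf{T}}u)\bm{1}_n=c\,\bm{1}_n$. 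Setting $M:=W-P$, the projector identities give $W^{j}=P+M^{j}$ for all $j\ge 1$, and the spectrum of $M$ consists of $0$ together with the eigenvalues of $W$ other than $1$, so $\rho(M)<1$ and in particular $I-M$ is invertible.

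Next I would unroll the recursion \eqref{eq:global-system}, $z_k=W^{k}z_0+\sum_{j=0}^{k-1}W^{j}u$. Substituting $W^{j}=P+M^{j}$ for $j\ge1$ (the $j=0$ term being kept as $u$) and using $\sum_{j=1}^{k-1}M^{j}=(M-M^{k})(I-M)^{-1}$ together with $u+M(I-M)^{-1}u=(I-M)^{-1}u$, the input sum telescopes to $(k-1)Pu+(I-M)^{-1}u-M^{k}(I-M)^{-1}u$. Combining this with $W^{k}z_0=Pz_0+M^{k}z_0$ and with $Pu=c\,\bm{1}_n$ gives, for every $k\ge1$,
\[
z_k=ck\,\bm{1}_n+s+M^{k}\bigl(z_0-(I-M)^{-1}u\bigr),\qquad s:=Pz_0-c\,\bm{1}_n+(I-M)^{-1}u,
\]
so that $\|z_k-ck\,\bm{1}_n-s\|_1=\bigl\|M^{k}\bigl(z_0-(I-M)^{-1}u\bigr)\bigr\|_1$.

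To conclude, since $\rho(M)<1$ there are a constant $C$ and some $\rho'\in(\rho(M),1)$ with $\|M^{k}\|\le C(\rho')^{k}$; with norm equivalence on $\mathbb{R}^{n}$ this bounds $\|z_k-ck\,\bm{1}_n-s\|_1$ by $C'(\rho')^{k}$, where $C'$ depends only on $W$, $z_0$, $u$. Hence, given $\epsilon>0$, it suffices to take any integer $k_0>\log(\epsilon/C')/\log\rho'$, and the inequality then holds for all $k\ge k_0$. Uniqueness: the representation above already shows $z_k-ck\,\bm{1}_n\to s$, so $(c,s)$ is the unique pair for which $z_k-ck\,\bm{1}_n-s\to0$ (if $(c',s')$ were another such pair, then $(c-c')k\,\bm{1}_n+(s-s')\to 0$, forcing $c=c'$ and then $s=s'$); this is exactly the $s$ above, valid simultaneously for every $\epsilon$. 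I expect the only genuinely delicate step to be the bookkeeping that extracts the linear term $ck\,\bm{1}_n$ cleanly — in particular verifying $Pu=c\,\bm{1}_n$, which is where the leader structure enters; the remaining ingredients (the telescoping of $\sum W^{j}u$ and the geometric decay of $M^{k}$) are routine.
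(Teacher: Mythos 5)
Your proof is correct and follows essentially the same route as the paper, which defers this lemma to Theorem 1 of the preliminary work \cite{lys} but uses the identical spectral splitting in its Appendix (cf.\ the expansion of $z_{k+1}$ into $c(k+1)\bm{1}_n$ plus geometrically decaying modes in the proof of Theorem \ref{th:final-error}): isolate the eigenvalue-one (consensus) direction, which carries the ramp $ck\bm{1}_n$, and show the remainder decays geometrically. The only substantive details—verifying $Pu=c\bm{1}_n$ via the leader structure ($v=e_n$ because the leader's row of $L$ is zero), and the projector bound $\|M^k\|\le C(\rho')^k$, which even covers non-diagonalizable $W$—are both handled correctly in your argument.
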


{\color{black}{
Lemma \ref{le:ep-convergence} illustrates that when the formation is in the linear steady trajectory with tolerant accuracy $\epsilon$ (we call it as $\epsilon$-steady pattern hereafter), all robots are running at a common speed with fixed relative state deviations. 
Utilizing this property and given appropriate following strategy for $r_a$, 
we have the following result. 
\begin{lemma}\label{le:steady-subset}
Given an arbitrary $\epsilon>0$, there always exists a $k_1\in\mathbb{N}^{+}$, $\forall k\ge k_1$, $\mathcal{V}_{\sss F}^{a}(k)$ remains unchanged. 
\end{lemma}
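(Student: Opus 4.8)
The plan is to show that, once the formation has entered its $\epsilon$-steady pattern, $r_a$ itself settles into moving at the common formation velocity with a \emph{fixed} offset; then the relative position $\mathbf{z}_k^i-\mathbf{z}_k^a$ of every formation robot converges, and membership in $\mathcal{V}_{\sss F}^{a}(k)$ --- which is governed only by whether the limiting distance is below $R_f$ --- stops changing. First I would record what Lemma~\ref{le:ep-convergence} yields for the formation: for the prescribed $\epsilon$ there are $k_s\in\mathbb{N}^{+}$ and $s\in\mathbb{R}^{n}$ so that $\delta_k:=z_k-ck\bm{1}_n-s$ has $\|\delta_k\|_1<\epsilon$ for $k\ge k_s$; moreover its proof in fact delivers the geometric rate $\|\delta_k\|_1=\bm{O}(\rho^k)$ for some $\rho\in(0,1)$ (a consequence of the spectral gap in Assumption~\ref{assum:stability} and the uniqueness of $s$), so $\sum_k\|\delta_k\|_1<\infty$. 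In particular $z_k^i-z_{k-1}^i=c+(\delta_k^i-\delta_{k-1}^i)$ in each coordinate, hence for \emph{any} subset $\mathcal{S}\subseteq\mathcal{V}$ the empirical velocity $\tfrac{1}{|\mathcal{S}|}\sum_{i\in\mathcal{S}}(z_k^i-z_{k-1}^i)$ differs from $c$ by at most $\|\delta_k\|_1+\|\delta_{k-1}\|_1$, uniformly in $\mathcal{S}$.

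Next I would analyze $r_a$'s own trajectory under a following strategy of the form \eqref{eq:movement} (which, as written, feeds back the exact states of the observed robots). Since the formation robots hold essentially fixed relative positions for $k\ge k_s$, an admissible strategy can bring $r_a$ in finitely many further steps to an offset outside the avoidance margin of every robot, so we may take $g_a(\mathcal{V}_{\sss F}^{a}(k))=0$ for all $k\ge k'$ with $k'$ finite --- this is the flexibility meant by ``an appropriate following strategy''. For $k\ge k'$, \eqref{eq:movement} then gives $z_{k+1}^a-z_k^a=c+\eta_k$ in each coordinate, where $|\eta_k|\le\|\delta_k\|_1+\|\delta_{k-1}\|_1$ by the first step. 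Thus $\phi_k:=z_k^a-ck$ obeys $\phi_{k+1}-\phi_k=\eta_k$ with $\sum_k|\eta_k|<\infty$, so $\phi_k$ converges to a constant $s^a$. Since $z_k^i-z_k^a=s^i+\delta_k^i-\phi_k$ in each coordinate and $\delta_k^i\to 0$, the $2$-D offset satisfies $\mathbf{z}_k^i-\mathbf{z}_k^a\to\mathbf{d}^i$, a fixed vector, for every $i\in\mathcal{V}$, and therefore $\|\mathbf{z}_k^i-\mathbf{z}_k^a\|_2\to\|\mathbf{d}^i\|_2$.

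Finally, assuming $\|\mathbf{d}^i\|_2\neq R_f$ for every $i$, each distance is eventually strictly on one side of $R_f$, so there is a finite $k_1^i$ with $i\in\mathcal{V}_{\sss F}^{a}(k)\Leftrightarrow\|\mathbf{d}^i\|_2<R_f$ for all $k\ge k_1^i$; because $\mathcal{V}$ is finite, $k_1:=\max\{k',\max_{i\in\mathcal{V}}k_1^i\}$ is finite and $\mathcal{V}_{\sss F}^{a}(k)=\{\,i:\|\mathbf{d}^i\|_2<R_f\,\}$ for all $k\ge k_1$, which is the claim. The degenerate case $\|\mathbf{d}^i\|_2=R_f$ is excluded by perturbing the limiting offset $\mathbf{s}^a$ of $r_a$ with an arbitrarily small finite-time correction, as only the finitely many circles $\{\mathbf{x}:\|\mathbf{s}^i-\mathbf{x}\|_2=R_f\}$, $i\in\mathcal{V}$, are forbidden --- again absorbed into the choice of strategy.

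The crux, I expect, is upgrading Lemma~\ref{le:ep-convergence} from its stated $\epsilon$-bound to the \emph{summable} (geometric) decay of $\delta_k$ used above: this is exactly what forces $r_a$'s accumulated tracking error $\sum_k\eta_k$ to converge rather than merely stay bounded, and without it $r_a$ could drift slowly against the formation so that $\mathcal{V}_{\sss F}^{a}(k)$ need not stabilize. The boundary exclusion $\|\mathbf{d}^i\|_2\neq R_f$ and the finite-time vanishing of $g_a$ are the remaining points that need a careful (though harmless) statement, since the following strategy is ours to design; everything else is bookkeeping with telescoping sums over the finite set $\mathcal{V}$.
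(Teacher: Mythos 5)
Your proposal is correct and follows essentially the same route the paper takes: the paper disposes of Lemma~\ref{le:steady-subset} with a two-sentence argument that once the $\epsilon$-steady pattern is reached, $r_a$ under a strategy like \eqref{eq:movement} ``moves stably with the MRN with almost the same velocity,'' so the robots within its observation range stop changing. Your elaboration — upgrading Lemma~\ref{le:ep-convergence} to the summable geometric decay guaranteed by Assumption~\ref{assum:stability} so that $r_a$'s accumulated tracking error converges rather than drifts, and excluding the degenerate limiting distance $\|\mathbf{d}^i\|_2=R_f$ by a small adjustment of the following strategy — simply makes rigorous the steps the paper leaves implicit, and correctly identifies them as the points on which the claim actually rests.
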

Lemma \ref{le:steady-subset} follows easily from Lemma \ref{le:ep-convergence}. 
Taking the moving strategy \eqref{eq:movement} as an example, 
when the formation reaches $\epsilon$-steady pattern, $r_a$ will also move stably with the MRN with almost the same velocity, and thus the formation robots in the observation range of $r_a$ will not change. 
Based on the analysis, we determine the constant local subset $\mathcal{V}_{\sss F}$ by 
\begin{equation}
\mathcal{V}_{\sss F}=\mathcal{V}_{\sss F}^{a}(k_{end}),
\end{equation}
where $k_{end}$ represents the time when $r_a$ stops observing the MRN. 
For simplicity, we temporarily assume $\mathcal{V}_{\sss F}\subseteq\mathcal{V}_{\sss F}^{a}(k)$ for an arbitrary $k$, and extend the analysis to the cases when this assumption is violated in Section \ref{subsec:extension}.  
}}

\subsection{Steady Pattern Identification}\label{subsec:steady}

{\color{black}{
After the local set $\mathcal{V}_{\sss F}$ is determined, the steady pattern parameters of the formation can be identified from the observations by utilizing Lemma \ref{le:ep-convergence}. 
}}
Based on (\ref{eq:steady-state}) and taking the observation noises into account, if the formation has reached $\epsilon$-steady pattern, then the pattern parameters can be identified by solving
\begin{equation} \label{eq:solving-steady}
\mathop {\min }\limits_{c,s^{\sss F}} \sum\limits_{t = {k}}^{k + L_c} {{{\left\| {\tilde z_t^{\sss F}} - ct \bm{1}_{n_f}  +  s^{\sss F} \right\|}_2^2}},
\end{equation}
\textcolor{black}{where ${\tilde z_t^{\sss F}}=[\tilde z_t^i,i\!\in\!\mathcal{V}_{\sss F}]\!\in\!\mathbb{R}^{n_f}$ represents the observation vector of $\mathcal{V}_{\sss F}$ at time $t$, }
$n_f=|\mathcal{V}_{\sss F}|$, and $L_c$ is the observation window length. 
Note that (\ref{eq:solving-steady}) is a typical least squares problem, whose solution is given by 
\begin{itemize}
\item \textbf{Steady pattern estimator}:
\begin{equation} \label{eq:window-s}
\!\!\left \{
\begin{aligned}
\hat c(k,L_c)  &\!=\!{\sum\nolimits_{t= k}^{k+L_c-1} \bm{1}_{n_f}^\mathsf{T} (\tilde{z}_{t+1}^{\sss F}-\tilde{z}_{t}^{\sss F}) } /{ ({n_f}{L_c}) },  \\
\hat{s}^{\sss F}(k,L_c) &\!=\! {\sum\nolimits_{t = k+1}^{k+L_c} (\tilde{z}_{t}^{\sss F} -\hat{c} t \bm{1}_{n_f}) } / {L_c}.
\end{aligned}\right.
\end{equation}
\end{itemize}

Next, we demonstrate the estimation performance of (\ref{eq:window-s}). 
\begin{theorem}[Accuracy of $\hat c$ and $\hat{s}^{\sss F}$]\label{th:cs-performance}
Suppose the MRN has reached $\epsilon$-steady pattern after $k_0$. 
Let $\Delta_c= \hat c(k_0,L_c) - c $ be the estimation error of $\hat c$, then we have 
\begin{equation} \label{eq:accuracy-c}
\Pr\left\{ | \Delta_c | \le \frac{4\epsilon}{\sqrt{L_c}} \right\}\ge P_1(L_c),
\end{equation}
where $P_1(L_c)= 1 - 2 \exp\{-\frac{ {n_f} {L_c}\epsilon^{2}}{\sigma^{2}}\}$. 
Denote the estimation error of $\hat s^{\sss F}$ as $\Delta_s= \bm{1}_{n_f}^\mathsf{T}(\hat{s}^{\sss F}(k_0,L_c) - s)/n_f $, then it satisfies
\begin{equation} \label{eq:expectation-variance}
\mathop {\lim }\limits_{L_c \to \infty } | \mathbb{E}[\Delta_s] | \le 2\epsilon,~\mathop {\lim }\limits_{L_c \to \infty } \mathbb{D}[\Delta_s]=\frac{\sigma^2}{2n_f},
\end{equation}
\textcolor{black}{where $\mathbb{E}[\cdot]$ and $\mathbb{D}[\cdot]$ represent the expectation and variance of a random variable, respectively. }
\end{theorem}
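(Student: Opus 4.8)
The plan is to expand both estimators around the exact linear steady trajectory guaranteed by Lemma~\ref{le:ep-convergence} and then reduce everything to elementary manipulations of i.i.d.\ Gaussian vectors. Set $e_k := z_k - ck\bm{1}_n - s$; by Lemma~\ref{le:ep-convergence} one has $\|e_k\|_1 < \epsilon$ for all $k \ge k_0$, hence the restriction $e_k^{\sss F}$ of $e_k$ to $\mathcal{V}_{\sss F}$ obeys $\|e_k^{\sss F}\|_1 \le \|e_k\|_1 < \epsilon$, and $\tilde z_k^{\sss F} = ck\bm{1}_{n_f} + s^{\sss F} + e_k^{\sss F} + \omega_k^{\sss F}$ with $\omega_k^{\sss F} \sim N(\bm{0}, \sigma^2 I_{n_f})$ independent across $k$ (here $s^{\sss F}$ is the restriction of $s$ to $\mathcal{V}_{\sss F}$).

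First I would handle $\hat c$. Substituting this decomposition into the first line of \eqref{eq:window-s}, the sum over $t$ telescopes and leaves
\[
\Delta_c = \frac{\bm{1}_{n_f}^\mathsf{T}(e_{k_0+L_c}^{\sss F} - e_{k_0}^{\sss F})}{n_f L_c} + \frac{\bm{1}_{n_f}^\mathsf{T}(\omega_{k_0+L_c}^{\sss F} - \omega_{k_0}^{\sss F})}{n_f L_c}.
\]
The first term is deterministic with magnitude at most $2\epsilon/(n_f L_c)$ by the $\ell_1$ bound, and the second is zero-mean Gaussian, since $\bm{1}_{n_f}^\mathsf{T}(\omega_{k_0+L_c}^{\sss F} - \omega_{k_0}^{\sss F}) \sim N(0, 2n_f\sigma^2)$. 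To obtain \eqref{eq:accuracy-c} I would bound the event $\{|\Delta_c| > 4\epsilon/\sqrt{L_c}\}$ by $\{|\eta| > 4\epsilon n_f\sqrt{L_c} - 2\epsilon\}$ for this Gaussian $\eta$ (the $-2\epsilon$ coming from moving the $O(1/L_c)$ bias across), and then apply the standard Gaussian tail bound $\Pr\{|\eta| > t\} \le 2\exp\{-t^2/(2\,\mathbb{D}[\eta])\}$. The constant $4$ in $4\epsilon/\sqrt{L_c}$ is precisely what makes $t^2/(4n_f\sigma^2) \ge n_f L_c\epsilon^2/\sigma^2$ hold for all integers $n_f, L_c \ge 1$, which yields $P_1(L_c)$.

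For $\hat s^{\sss F}$ I would substitute $\tilde z_t^{\sss F} = ct\bm{1}_{n_f} + s^{\sss F} + e_t^{\sss F} + \omega_t^{\sss F}$ and $\hat c = c + \Delta_c$ into the second line of \eqref{eq:window-s}, and use $\frac{1}{L_c}\sum_{t=k_0+1}^{k_0+L_c} t = k_0 + \frac{L_c+1}{2} =: \alpha$, which gives
\[
\Delta_s = -\alpha\,\Delta_c + \frac{1}{n_f L_c}\sum_{t=k_0+1}^{k_0+L_c}\bm{1}_{n_f}^\mathsf{T} e_t^{\sss F} + \frac{1}{n_f L_c}\sum_{t=k_0+1}^{k_0+L_c}\bm{1}_{n_f}^\mathsf{T}\omega_t^{\sss F}.
\]
Taking expectations kills the two noise averages, leaving $|\mathbb{E}[\Delta_s]| \le \alpha\cdot\frac{2\epsilon}{n_fL_c} + \frac{\epsilon}{n_f}$, and since $\alpha/L_c \to 1/2$ this tends to $\epsilon/n_f + \epsilon/n_f = 2\epsilon/n_f \le 2\epsilon$. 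For the variance I would re-expand $\Delta_s - \mathbb{E}[\Delta_s]$ in the independent family $\{\omega_t^{\sss F}\}_{t=k_0}^{k_0+L_c}$, noting that the only coupling is that $\omega_{k_0+L_c}^{\sss F}$ occurs both inside $\Delta_c$ and at the top of the last sum; collecting coefficients yields
\[
\Delta_s - \mathbb{E}[\Delta_s] = \frac{1}{n_f L_c}\Big[(1-\alpha)\bm{1}_{n_f}^\mathsf{T}\omega_{k_0+L_c}^{\sss F} + \alpha\,\bm{1}_{n_f}^\mathsf{T}\omega_{k_0}^{\sss F} + \sum_{t=k_0+1}^{k_0+L_c-1}\bm{1}_{n_f}^\mathsf{T}\omega_t^{\sss F}\Big],
\]
so by independence $\mathbb{D}[\Delta_s] = \frac{\sigma^2}{n_f L_c^2}\big[(1-\alpha)^2 + \alpha^2 + (L_c-1)\big]$, which converges to $\sigma^2/(2n_f)$ since $\alpha \sim L_c/2$.

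The main obstacle is the variance of $\Delta_s$: plugging $\hat c$ into the $s$-estimator makes $\Delta_s$ a linear combination of correlated Gaussians, so a term-by-term ``variance of a sum'' bound overcounts; the remedy is to pass to the underlying i.i.d.\ noise coordinates and carefully track the single shared index $t = k_0 + L_c$, after which the variance is a one-line computation over independent terms. The bias claim additionally relies on the (easy) observation that $\mathbb{E}[\Delta_c] = O(1/L_c)$, so that $\alpha\,\mathbb{E}[\Delta_c]$ stays bounded as $L_c \to \infty$; all remaining estimates are routine.
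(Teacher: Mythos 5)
Your proposal is correct and follows essentially the same route as the paper: decompose the observations around the linear steady trajectory of Lemma~\ref{le:ep-convergence}, telescope the increments to get $\Delta_c$ as a bounded residual plus a Gaussian difference controlled by a Chernoff-type tail bound, then substitute $\hat c = c+\Delta_c$ into the $s$-estimator and pass to the limit in $L_c$. The only notable difference is bookkeeping in the variance of $\Delta_s$: you collect coefficients of the underlying independent noise vectors (correctly handling the shared index $t=k_0+L_c$) and compute the exact finite-sample variance, whereas the paper splits $\Delta_s$ into a residual part and a noise part and simply adds their variances; both yield the same limit $\sigma^2/(2n_f)$, and your finite-$L_c$ expression is in fact the more precise one.
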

\begin{proof}
The proof is provided in Appendix \ref{apdix:cs-performance}. 
\end{proof}

Theorem \ref{th:cs-performance} demonstrates that, with sufficient observations over the $\epsilon$-steady pattern, the estimation accuracy of $\hat c$ is determined by $\epsilon$. 
\textcolor{black}{In other words, the confidence interval of $\hat c(k_0,L_c)$ is given as $\hat c(k_0,L_c) \in[c-\frac{4\epsilon}{\sqrt{L_c}},c+\frac{4\epsilon}{\sqrt{L_c}}]$ with probability at least $P_1(L_c)$.}
Specifically, when $L_c\to\infty$, we have with probability one that 
\begin{equation}
\mathop {\lim }\limits_{L_c \to \infty } | \Delta_c | =0 . 
\end{equation}
However, as for the estimation accuracy of $\hat s^{\sss F}$, it only achieves $\epsilon$-level accuracy in the expected sense with bounded variance. 

\begin{remark}
Note that (\ref{eq:expectation-variance}) only presents the estimation error of $\hat s^{\sss F}$ in limit form. 
It is shown in the proof of Theorem \ref{th:cs-performance} that, one has with high probability 
\begin{equation}
\! \left \{
\begin{aligned}
&| \mathbb{E}[\Delta_s]  | \!\le\! (2 \! + \!\frac{2{k_0}+1}{L_c})\epsilon, \\
&\mathbb{D}[\Delta_s]  \!= \! \frac{\sigma^2}{2n_f} \! + \!\sigma^2(\frac{1}{n_f L_c^2} \!+\! \frac{(2{k_0}+1)^2}{L_c^2} \!+\! \frac{4{k_0}+2}{L_c} ). 
\end{aligned} \right.\!\!
\end{equation}
Despite the undesired uncertainty in $\Delta_s$, one can tighten the error bound of $\hat{s}^{\sss F}(k,L_c)$ by increasing the observations. 
\end{remark}

Note that (\ref{eq:window-s}) is not an appropriate solution if the system is not in $\epsilon$-steady pattern. 
Hence, we need to judge whether the system is in $\epsilon$-steady pattern before obtaining the final $\hat c$ and $\hat s$. 
Inspired by (\ref{eq:accuracy-c}), 
we in turn deduce that $| \Delta_c | > {4\epsilon}/{\sqrt{L_c}}$ holds with high probability if the observations used are not all in $\epsilon$-steady pattern. 
Hence, we use the last $L_c$ groups of observation to obtain a benchmark estimator of $c$ by 
\begin{equation}
\hat c_b {\buildrel \Delta \over =} \hat c(k_{end}-L_c,L_c).
\end{equation}
Based on Theorem \ref{th:cs-performance}, 
if the system is in $\epsilon$-steady pattern after $k_0$, one has with probability $1 - 2 \exp\{-\frac{ {n_f} {L_c}\epsilon^{2}}{\sigma^{2}}\}$
\begin{align} \label{eq:2c-error}
| \hat c(k_0,L_c)- \hat c_b| \le &| \hat c(k_0,L_c)- c| + |c-\hat c_b | \nonumber \\
\le & {8\epsilon}/{\sqrt{L_c}}.
\end{align}
Although infinite observations are not available in practice, the upper bound in (\ref{eq:2c-error}) can be used as an empirical criterion to judge when the $\epsilon$-steady pattern is reached, given by
\begin{itemize}
\item \textbf{$\epsilon$-steady time criterion}:
\begin{align}\label{eq:criterion-ks}
{ k_{s}} \!=\! \inf &\{ k : |\hat c(k,L_c)-\hat c_b | \le {8\epsilon}/{\sqrt{L_c}} \}.
\end{align}
\end{itemize}
Once ${k_{s}}$ is obtained, the formation input parameters $c$ and $h$ are finally determined by 
\begin{equation}\label{eq:final-ch}
\left \{
\begin{aligned}
\hat c &= \hat c(k_{s},L_{s}),  \\
\hat h^{\sss F} & =  \hat s^{\sss F}(k_{s},L_{s}) \!-\!\bm{1}_{n_f} \hat s^{i}(k_{s},L_{s}),  
\end{aligned} \right.
\end{equation}
where $L_{s}=k_{end}-k_{s}$ represents the amount of observations of the system in the $\epsilon$-steady stage.

\subsection{Range-shrink: Motivated by Truncated Estimator}

To explicitly illustrate the necessity of the range-shrink strategy, we begin with the case where the observations are noise-free and the input is known. 
Under full observation, denote $z_{k+1}^{u} {\buildrel \Delta \over =} z_{k+1}-u_{k}=Wz_{k}$. 
{\color{black}{
Then, the global topology can be obtained from $K$ groups of noise-free observations by 
\begin{equation} \label{eq:ideal-estimator}
W = Z_{2:K+1}^{u} {Z_{1:K}^\mathsf{T}}(Z_{1:K} Z_{1:K}^\mathsf{T} )^{-1} ,
\end{equation}
where $Z_{2:K+1}^{u}=[z_{2}^{u},z_{3}^{u},\cdots,z_{K+1}^{u}]$ and $Z_{1:K}=[z_{1},z_{2},\cdots,z_{K}]$. 
Note that the feasibility of the estimator under full observations relies on the invertibility of $(Z_{1:K} Z_{1:K}^\mathsf{T})$, which is related to the number of observations and the steady pattern of the formation. 
Here we temporarily suppose the invertibility holds, and analyze the details in the proposed local topology estimator in Section \ref{subsec:estimator}. 

Let $W_{\sss FF}=[w_{ij},~i,j\in\mathcal{V}_{\sss F}]\!\in\!\mathbb{R}^{n_f\times n_f}$ be the topology matrix of $\mathcal{V}_{\sss F}$. 
To infer $W_{\sss FF}$ from $\{ z_{k}^{\sss F}\}$, 
it is certainly free for one to adopt a truncated form of (\ref{eq:ideal-estimator}) as in \cite{santos2019local}
\begin{equation} \label{eq:truncated-estimator}
\hat W_{\sss FF} = Z_{2:K+1}^{u,\sss F} (Z_{1:K}^{\sss F})^\mathsf{T}(Z_{1:K}^{\sss F} (Z_{1:K}^{\sss F})^\mathsf{T} )^{-1}. 
\end{equation}
The works \cite{matta2018consistent,santos2019local,cirillo2021learning} have explored the conditions of using the truncated estimator to approximate the ground truth\footnote{In \cite{matta2018consistent,santos2019local,cirillo2021learning}, 
the conditions of using estimator (\ref{eq:truncated-estimator}) are summarized as: 
i) the topology is in symmetric Erd\H{o}s-R{\'e}nyi random graph form with vanishing connection probability, 
and ii) the ratio of the observable nodes to all nodes converges to constant as the size of the network goes to infinity.}.
Nevertheless, these conditions are not consistent with our problem setting,}} 
and $\hat W_{\sss FF}$ is far away from the ground truth from basic linear algebra, i.e., 
\begin{equation}\label{eq:unequal}
\hat W_{\sss FF} \neq  [Z_{2:K+1}^{u} {Z_{1:K}^\mathsf{T}}(Z_{1:K} Z_{1:K}^\mathsf{T} )^{-1}]_{\sss FF}. 
\end{equation}
More precisely, let $\mathcal{V}_{\sss F'}=\mathcal{V}\backslash\mathcal{V}_{\sss F}$ and the formation dynamics (\ref{eq:global-system}) can be divided into 
\begin{equation}\label{eq:devide_state}
\left[ {\begin{aligned}
{z}_{k+1}^{\sss F}\\
{z}_{k+1}^{\sss F'}
\end{aligned}} \right] \!=\! \left[ {\begin{aligned}
W_{\sss FF}~W_{\sss FF'}\\
W_{\sss F'F}~W_{\sss F'F'}
\end{aligned}} \right]\left[ {\begin{aligned}
{z}_{k}^{\sss F}\\
{z}_{k}^{\sss F'}
\end{aligned}} \right] \!+ \!\left[ {\begin{aligned}
u_{k}^{\sss F}\\
u_{k}^{\sss F'}
\end{aligned}} \right],
\end{equation}
where ${z}_{k}^{\sss F'}$ is the state of $\mathcal{V}_{\sss F'}$ at time $k$. 
Substituting $\tilde{z}_{k}^{\sss F}={z}_{k}^{\sss F}+\omega_k^{\sss F}$ into (\ref{eq:devide_state}), 
the observation of $\mathcal{V}_{\sss F}$ is given by 
\begin{equation}\label{eq:local_observation}
\tilde{z}_{k+1}^{\sss F}= {W_{\sss FF}} \tilde{z}_{k}^{\sss F} + u_{k}^{\sss F}  + {W_{\sss FF'}}z_{k}^{\sss F'} + \omega_{k+1}^{\sss F}-{W_{\sss FF}}\omega_{k}^{\sss F}.
\end{equation}
Note that (\ref{eq:local_observation}) only represents the explicit relationship of every two consecutive observations, not a real process. 
{\color{black}{
It is clear that the unobserved and non-negligible term $\{  {W_{\sss FF'}}z_{k}^{\sss F'} \}$ incurs the inequality of \eqref{eq:unequal}, 
making it extremely hard to obtain an unbiased estimator of $W_{\sss FF}$ from noisy $\{\tilde z_{k}^{\sss F} \}$. 
}}

Thanks to the constrained interaction characteristics of MRNs, 
\textcolor{black}{we observe that the robots that are outside $r_i$'s interaction range have no influence on $r_i$. }
Therefore, we transform the inference objective by shrinking the inference scope from $\mathcal{V}_{\sss F}$ to a smaller $\mathcal{V}_{\sss H}$, which directly avoids the inference bias in the truncated estimator (\ref{eq:truncated-estimator}). 
As shown in Fig.~\ref{fig-range}, we use a concentric circle to cover the feasible subset $\mathcal{V}_{\sss H} \!\subseteq\! \mathcal{V}_{\sss F}$ with radius $R_h$, satisfying
\begin{equation}
R_h=R_f-R_c.
\end{equation}  
Once the subset $\mathcal{V}_{\sss H}$ is determined, 
\textcolor{black}{we can design an unbiased estimator of the following local topology,
\begin{equation}
W_{\sss HF}\!=\![w_{ij},~i\!\in\!\mathcal{V}_{\sss H},j\in\mathcal{V}_{\sss F}]\!\in\!\mathbb{R}^{n_h\times n_f},
\end{equation}
where $n_h=|\mathcal{V}_{\sss H} |$. 
Note that $W_{\sss HF}$ covers all connections within $\mathcal{V}_{\sss H}$ and the directed connections from $\{\mathcal{V}_{\sss F}\backslash\mathcal{V}_{\sss H}\}$ to $\mathcal{V}_{\sss H}$. 
The details are presented in the next section. }

\begin{figure}[t]
\centering
\setlength{\abovecaptionskip}{0.2cm}
\includegraphics[width=0.45\textwidth]{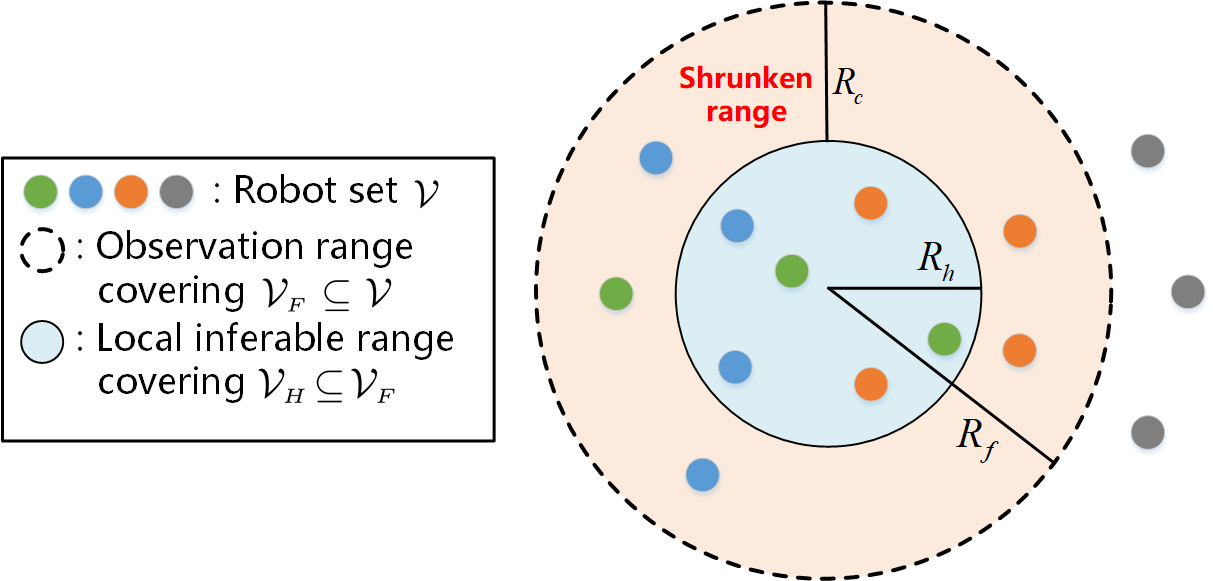}
\caption{Illustration of observation ranges. 
The blue circle area enclosing $\mathcal{V}_{\sss H}$ is with radius $R_h$, and larger circle area enclosing $\mathcal{V}_{\sss F}$ is with radius $R_f$.}
\label{fig-range}
\vspace*{-8pt}
\end{figure}

\subsection{Inferring Interaction Range by Active Excitation}\label{sec:excitation}
Next, we present the active excitation based method to illustrate how to estimate the interaction radius $R_c$.

Note that robots are equipped with sensors to detect obstacles around. 
When $r_a$ is very close to $r_j$, by the obstacle-avoidance rule (\ref{eq:obstacle-rule}), an excitation input will be triggered in $r_j$. 
Then, the observed state of $r_i$ under excitation is given by 
\begin{itemize}
\item \textbf{Observation under excitation}: 
\begin{align}\label{eq:excited-observation}
\tilde z_{k}^{j,e} = ck+s^{j} + \bm{\epsilon}_{k}^{j}+ \omega_{k}^{j} + u_{k-1}^{j,e},
\end{align}
\end{itemize}
\textcolor{black}{where $\bm{\epsilon}_{k}^{j}$ is the $j$-th element of $\bm{\epsilon}_{k}=z_k-{c} k \bm{1}_n -s$, which represents the residual error vector with the linear steady trajectory. }
According to Lemma \ref{le:ep-convergence}, when the MRN is in $\epsilon$-steady pattern, $ \| \bm{\epsilon}_{k} \|_1\le \epsilon$. 
Next, we present the details of the active excitation based method as follows. 

\begin{itemize}
\item \textit{Step 1: Initial excitation on $r_j$.}
\end{itemize}

\textcolor{black}{Based on (\ref{eq:excited-observation}) and recalling the velocity estimation error $\Delta_c$, the velocity prediction error on $j\in\mathcal{V}$ at the $\epsilon$-steady pattern is calculated by} 
\begin{align}\label{eq:excitation-j}
\delta_{k}^{j} &=\tilde z_{k}^{j,e} - \tilde z_{k-1}^{j} - \hat c \nonumber \\
&= (\omega_{k}^{j}- \omega_{k-1}^{j} ) + ( \bm{\epsilon}_k^{j}-\bm{\epsilon}_{k-1}^{j}) - \Delta_c + u_{k-1}^{j,e}.
\end{align}
Note that $\delta_{k}^{c}$ is a random variable, and $u_{k-1}^{j,e}\!=\!0$ if $r_j$ is not excited. 
Based on Theorem \ref{th:cs-performance}, if $r_j$ is under no excitation, \textcolor{black}{then we have $|\delta_{k}^{j}|\!\le\! \sqrt{3\epsilon^2\!+\!2\sigma^2}$ with a high probability. }
Utilizing this empirical result, we design the following criterion to determine whether $r_j$ is excited by $r_a$ and its reaction range (i.e., the obstacle detection range $R_o$), given by
\begin{equation} \label{eq:estimator-Ro}
 \left \{  
\begin{aligned}
k_e&=\inf\{k:| \delta_{k}^{j}  |>\sqrt{3\epsilon^2+2\sigma^2}\}, \\
\hat{R}_o &=\| {\mathbf z}_{k_e}^{j} - {\mathbf z}_{k_e}^{a} \|_2, 
\end{aligned} \right.
\end{equation}
where $k_e$ is the starting moment of the excitation stage. 

\begin{itemize}
\item \textit{Step 2: Excitation strategy.}
\end{itemize}

To keep $r_a$ within the obstacle detection range of $r_j$, we define the feasible state set of $r_a$ as  
\begin{align}
\mathcal{Z}_{k+1}^{a}= \{  {\mathbf z}_{k+1}^{a}: \| {\mathbf z}_{k+1}^{a} -{\hat{\mathbf z}}_{k+1}^{j} \|_2 \le \hat{R}_o \}, 
\end{align}
For better identification, the next movement of $r_a$ is randomly selected from $\mathcal{Z}_{k+1}^{a}$ in the same direction, i.e., 
\begin{equation}\label{eq:update-rule}
{\mathbf z}_{k+1}^{a} \!\in\! \mathcal{Z}_{k+1}^{a}\! \cap \! \{ {\mathbf z}_{k+1}^{a} : {z}_{k+1}^{a}\cdot {z}_{k}^{a} \ge 0~\text{in each dimension}\}. \!\!
\end{equation}

\begin{itemize}
\item \textit{Step 3: Estimating $R_c$ based on out-neighbors.}
\end{itemize}

If $r_j$ is injected with the excitation input $u_{k-1}^{j,e}$, the influence of $u_{k-1}^{j,e}$ will spread to $\mathcal{N}_{j}^{out}$ in following moments. 
Suppose $r_a$ makes excitations over $r_j$ for consecutive $m$ time steps, then the accumulated velocity prediction error of $i\in\mathcal{N}_{j}^{out}$ in the $m$-step is calculated by 
\begin{align}\label{eq:state-increment}
\delta_{k+m,k}^{i}= \tilde z_{k+m}^{i,e} - \tilde z_{k}^{i} - m \hat c .
\end{align}
Next, we define the following out-neighbor estimation function and demonstrate its accuracy. 
\begin{definition}[Out-neighbor indicator]\label{def:indicator}
The indicator of the event that $i\in \mathcal{N}_j^{out}$, $\Theta_{ij}$, is defined as 
\begin{equation}
\Theta_{ij}=\left \{
\begin{aligned}
&1, &&\text{if}~w_{ij}>0, \\
&0, &&\text{otherwise}. 
\end{aligned} \right.
\end{equation}
The estimator of $\Theta_{ij}$ is defined as 
\begin{equation}
\hat{\Theta}_{ij}=\left \{
\begin{aligned}
&1, &&\text{if}~|\frac{\delta_{k+m}^{i}}{m} |>(\frac{4}{\sqrt{L_c}} + \frac{4}{\sqrt{m}})\epsilon, \\
&0, &&\text{otherwise}. 
\end{aligned}\right.
\end{equation}
\end{definition}

\begin{theorem}[Accuracy of $\hat{\Theta}_{ij}$]\label{th:outneighbor-excitation}
Under $m$ consecutive excitations on $r_j$, \textcolor{black}{the true positive probability of estimator $\hat{\Theta}_{ij}$ is lower bounded as }
\begin{align}\label{eq:neighbor-iden}
&\Pr\left\{ {\Theta}_{ij}=1 |\hat{\Theta}_{ij}=1 \right\} \ge P_1(L_c)\cdot P_2(m),
\end{align}
where $P_2(m)=1 - 2 \exp\{-\frac{ {m}\epsilon^{2}}{\sigma^{2}}\}$. 
\end{theorem}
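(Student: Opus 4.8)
The plan is to split the $m$-step velocity-prediction error $\delta_{k+m,k}^{i}$ into an excitation-driven part and several noise/approximation parts, bound the latter by concentration, and observe that the threshold in Definition~\ref{def:indicator} is tuned to exactly those bounds, so a detection can only be triggered by a genuine excitation response. Telescoping the excited dynamics \eqref{eq:obstacle-rule} over the $m$ steps, and substituting $z_t^i=ct+s^i+\bm{\epsilon}_t^i$ and $\tilde z_t^i=z_t^i+\omega_t^i$, I would write
\begin{equation}
\delta_{k+m,k}^{i}=\Xi_{i}-m\Delta_c+(\bm{\epsilon}_{k+m}^{i}-\bm{\epsilon}_{k}^{i})+(\omega_{k+m}^{i}-\omega_{k}^{i}),
\end{equation}
where $\Delta_c=\hat c-c$ is the velocity estimation error and $\Xi_i$ denotes the accumulated response of $r_i$ to the excitation injected at $r_j$ during the $m$ steps. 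The structural fact to record here is that this response propagates strictly along the directed out-edges of $r_j$: after the first step only $r_j$ is perturbed, so $r_i$ first feels the excitation through the single weight $w_{ij}$, and if $w_{ij}=0$ the excitation contributes nothing to $z^i$. Hence $\Xi_i$ is nonzero — and, under excitation of a fixed sign, grows proportionally to $m$ — exactly when $\Theta_{ij}=1$.

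Next I would bound the three non-signal terms. By Theorem~\ref{th:cs-performance}, $|\Delta_c|\le 4\epsilon/\sqrt{L_c}$ with probability at least $P_1(L_c)$. By Lemma~\ref{le:ep-convergence}, in the $\epsilon$-steady pattern $\|\bm{\epsilon}_t\|_1<\epsilon$, hence $|(\bm{\epsilon}_{k+m}^{i}-\bm{\epsilon}_{k}^{i})/m|\le 2\epsilon/m\le 2\epsilon/\sqrt{m}$ deterministically. Since the $\omega_t$ are i.i.d.\ $N(0,\sigma^2 I)$, the difference $(\omega_{k+m}^{i}-\omega_{k}^{i})/m$ is $N(0,2\sigma^2/m^2)$, and the Gaussian tail bound gives
\begin{equation}
\Pr\!\left\{\left|\frac{\omega_{k+m}^{i}-\omega_{k}^{i}}{m}\right|>\frac{2\epsilon}{\sqrt{m}}\right\}\le 2\exp\!\left\{-\frac{m\epsilon^{2}}{\sigma^{2}}\right\}=1-P_2(m).
\end{equation}
Because $\hat c$ is formed from an observation window disjoint from times $k$ and $k+m$, the two good events are independent, so their intersection has probability at least $P_1(L_c)P_2(m)$, and on it the total magnitude of the non-signal terms in $\delta_{k+m,k}^{i}/m$ is at most $(4/\sqrt{L_c}+4/\sqrt{m})\epsilon$ — precisely the threshold of $\hat\Theta_{ij}$.

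Assembling the pieces: on this good event, $|\delta_{k+m,k}^{i}/m|$ can exceed $(4/\sqrt{L_c}+4/\sqrt{m})\epsilon$ only if $\Xi_i\neq 0$, which by the structural fact forces $w_{ij}>0$, i.e.\ $\Theta_{ij}=1$. Thus the false-detection event $\{\hat\Theta_{ij}=1,\Theta_{ij}=0\}$ is contained in the complement of the good event, and $\Pr\{\Theta_{ij}=1\mid\hat\Theta_{ij}=1\}\ge P_1(L_c)P_2(m)$ follows. I expect the main obstacle to be exactly this structural step when $m$ is not small: over many steps the excitation can in principle reach $r_i$ through multi-hop paths out of $r_j$ even when $w_{ij}=0$, so one must either keep the excitation horizon short relative to the reach from $r_j$, or show that such higher-order contributions stay dominated by the $1/\sqrt{m}$-shrinking threshold. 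The concentration estimates producing $P_1$ and $P_2$, and the check that the threshold matches the nuisance bound, are then routine.
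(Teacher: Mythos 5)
Your proposal follows essentially the same route as the paper: the paper expands $\delta_{k+m,k}^{i}$ into the excitation term, $-m\Delta_c$, the $\bm{\epsilon}$-difference and the noise difference, bounds $|\Delta_c|\le 4\epsilon/\sqrt{L_c}$ via Theorem~\ref{th:cs-performance} (probability $P_1(L_c)$), bounds the noise difference by $2\epsilon/\sqrt{m}$ via the Gaussian/Chernoff tail (probability $P_2(m)$), bounds the $\bm{\epsilon}$-difference by $\epsilon/\sqrt{m}$ deterministically, and concludes by the contrapositive exactly as you do. The multi-hop concern you flag at the end is real but is not resolved in the paper either — the paper simply writes the signal term as $\sum_{t=1}^m w_{ij}^{(t)}u^{j,e}_{k+m-t-1}$ with $w_{ij}^{(t)}$ the ``$t$-power of $w_{ij}$,'' implicitly ignoring indirect propagation, so your treatment is on par with (and more candid than) the paper's.
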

\begin{proof}
The proof is provided in Appendix \ref{apdix:outneighbor-excitation}. 
\end{proof}

Theorem \ref{th:outneighbor-excitation} demonstrates that by active excitations, the out-neighbors of $r_j$ (within the observation range) can be determined with a high probability. 
Besides, if there exists at least one out-neighbor of $r_j$ in $\mathcal{V}_{\sss F}$, then the two robots are always within the interaction range during the whole process. 
\textcolor{black}{Utilizing this characteristic, we take the maximum distance between $r_j$ and the inferred $r_j$'s out-neighbors from their observations as the lower bound of $R_h$, given by}
\begin{equation}\label{eq:range-bound}
R_c^{lb} \!=\! \sup \left\{ \| \tilde{\mathbf{z}}_{t}^{i}- \tilde{ \mathbf{z} }_{t}^{j} \|_2 : \hat{\Theta}_{ij}=1, t=1,\!\cdots\!,k_e \! +\! m \right\}. \!
\end{equation}

The procedures of obtaining $R_c^{lb}$ are summarized in Algorithm \ref{algo:infer-range}. 
Then, the interaction range satisfies 
\begin{equation}\label{eq:Rc-range}
R_c^{lb}\le R_c\le R_c^{ub}, 
\end{equation}
\textcolor{black}{where $R_c^{ub}=R_f$}. 
The range interval (\ref{eq:Rc-range}) is critical for final topology inference.

\begin{algorithm}[t]
    \caption{Infer the interaction range $R_c$}
    \label{algo:infer-range}
    \begin{algorithmic}[1]
    \REQUIRE{Steady moment $k_s$, excitation number $m$, $\hat c$ and $\hat h$.}
    \ENSURE{Lower bound of $R_c$.}
    \STATE Select a excitation target $j\in\mathcal{V}_{\sss F}$;
    \WHILE {$ | \delta_{k}^{j} |\le \sqrt{3\epsilon^2+2\sigma^2} $}
    {
       \STATE $r_a$ moves closer to $r_j$, $k=k+1$;
    }
	\ENDWHILE
	\STATE $k_e=k$, $\hat{R}_o =\| {\mathbf z}_{k_e}^{j} - {\mathbf z}_{k_e}^{a} \|_2$;
	\FOR {$t =1\to m$}
    {
    	\STATE Update ${\mathbf z}_{k_e +t}^{a}$ by (\ref{eq:update-rule});
    }
    \ENDFOR
	\FOR {all $i\in\mathcal{V}_{\sss F}\backslash\{j\}$}
	{
		\IF {$ |\frac{\delta_{k+m}^{i}}{m} |  > (\frac{4}{\sqrt{L_c}} + \frac{4}{\sqrt{m}})\epsilon $}
		{
			\STATE $\hat{\Theta}_{ij}=1$;
		}
	  	\ENDIF
    }
	\ENDFOR
	\IF { all $\hat{\Theta}_{ij}=0$}
	{
		\STATE re-select a target robot and go to line 2;
	}
  	\ENDIF
	\STATE Compute $R_c^{lb} $ by (\ref{eq:range-bound});
    \end{algorithmic}
\end{algorithm}

\section{Estimator Design and Performance Analysis}\label{sec:inference-estimation}

By the methods proposed in the last section, the obtained estimators of $\hat c$, $\hat h^{F}$ and $\hat R_c^{lb}$ make the local topology inference feasible. 
\textcolor{black}{However, directly using $\hat R_c^{lb}$ to determine $\mathcal{V}_{\sss H}$ is relatively conservative.}  
In this section, we first present the estimator of local topology $W_{\sss HF}$ and leverage it to reversely approximate $R_c$. 
Then, taking the estimation error of $\hat c$ and $\hat h^{F}$ into consideration, we give the non-asymptotic error bound of $\| \hat{W}_{\sss HF} - W_{\sss HF}  \|$. 
Finally, we demonstrate how to utilize the knowledge acquired in the active excitation stage to improve further the inference performance based on $\hat{W}_{\sss HF}$. 

\subsection{Local Topology Inference under Uncertain $R_c$} \label{subsec:estimator}
First, we analyze the inference performance of the ordinary least squares estimator under different interaction range $R_c$. 
If $R_c$ is determined, the inferable subset $\mathcal{V}_{\sss H} \!\subseteq\! \mathcal{V}_{\sss F}$ is also determined by $R_h=R_f-R_c$. 
Considering the possibility that the formation leader $r_{n}\in\mathcal{V}_{\sss F}$, we need to discriminate its influence. 
Given  $\mathcal{V}_{\sss F}$ and $k_s$, if the leader $r_{n}\in \mathcal{V}_{\sss F}$, then it is identified by 
\begin{equation}
\hat r_{n}= {\arg \mathop {\min }\limits_{i} \left\{  f_{c}^{i} : f_{c}^{i} \le \frac{8\epsilon}{\sqrt{L_c}} , i\in\mathcal{V}_{\sss F} \right\}},  \label{ww1}\\
\end{equation}
where $f_{c}^{i}={ | \sum\nolimits_{k = 0}^{k_s-1} (\tilde{z}_{k+1}^{i} - \tilde{z}_{k}^{i} - \hat c_b )| } / { {k_s} }$. 
Note that if $\hat r_{n}$ is empty, it means $r_{n}\notin \mathcal{V}_{\sss F}$. 
To discriminate this situation, we define the indicative leader vector $\mathbb{I}_{\sss F}$ by 
\begin{equation}\label{eq:indicator}
\mathbb{I}^{\sss F}(i)=\left\{ {\begin{aligned}
&1, &&\text{if}~\exists i\in \mathcal{V}_{\sss F}, i=\hat r_{n},\\
&0, &&\text{otherwise}. 
\end{aligned}} \right.
\end{equation}
Next, we will illustrate how to filter the influence of the input to infer the local topology, and use it to approximate the real $R_c$. 
Let $\mathcal{V}_{\sss H'}=\mathcal{V}_{\sss F}\backslash{\mathcal{V}_{\sss H}}$ and $W_{\sss HF}=[W_{\sss HH}~W_{\sss HH'}]$. 
Define two variables of filtered $\tilde z_{k}^{\sss F}$ and organize them as 
\begin{equation}\label{eq:filtered_observation}
\begin{aligned}
x_k &= ( \tilde z_{k}^{\sss F} -\hat h^{\sss F})\in \mathbb{R}^{n_f},    \\
y_k &= ( \tilde z_{k}^{\sss H} -\hat h^{\sss H} - \hat c \mathbb{I}^{\sss H}) \in \mathbb{R}^{n_h} ,   \\
X &=[x_0,x_1,\cdots,x_{k_s-1}]\in \mathbb{R}^{n_f \times k_s},   \\
Y &=[y_1,y_2,\cdots,y_{k_s}]\in \mathbb{R}^{n_h \times k_s}.
\end{aligned}
\end{equation}
\textcolor{black}{Then, by referring to the Theorem 2 in our preliminary work \cite{lys}, we present the following local topology estimator. }
\begin{theorem}\label{th:topo-estimator}
\textcolor{black}{Given the filtered observation matrices $X$ and $Y$, and supposing ${R}_c$ is known}, 
if $|\mathcal{V}_{\sss F}| \! + \! 1\! \le k_s$,
then the optimal estimation of $W_{\sss HF}$ in the sense of least squares is
\begin{equation}\label{eq:form-solution}
\hat W_{\sss HF}= { Y } X^\mathsf{T} ( X X^\mathsf{T})^{-1}.
\end{equation}
\end{theorem}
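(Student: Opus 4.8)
The plan is to recognise $\hat W_{\sss HF}$ as the closed-form solution of the matrix least-squares problem that defines the estimator, namely
\[ \hat W_{\sss HF}=\arg\min_{M\in\mathbb{R}^{n_h\times n_f}}\ \|Y-MX\|_F^2 , \]
and then to show that this minimiser is unique and equals $YX^{\mathsf{T}}(XX^{\mathsf{T}})^{-1}$. Before solving it I would record why this is the right objective: by the construction of $X,Y$ in \eqref{eq:filtered_observation}, since $R_c$ is known the shrunken set $\mathcal{V}_{\sss H}$ is fixed through $R_h=R_f-R_c$, and every in-neighbour of a robot in $\mathcal{V}_{\sss H}$ lies within distance $R_c$ of it and hence within $R_f$ of $r_a$, i.e.\ inside $\mathcal{V}_{\sss F}$. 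Consequently, in the rows of \eqref{eq:local_observation} indexed by $\mathcal{V}_{\sss H}$ the unobserved term $W_{\sss FF'}z_k^{\sss F'}$ is absent, and, using that $W$ is row-stochastic (so $W_{\sss HF}\bm{1}_{n_f}=\bm{1}_{n_h}$) together with the definitions of $\hat c,\hat h^{\sss F},\hat h^{\sss H}$ and $\mathbb{I}^{\sss H}$, the constant input is cancelled and the filtered observations obey the one-step model $y_k=W_{\sss HF}\,x_{k-1}+\xi_k$, where $\xi_k$ collects the observation-noise residual and the errors inherited from $\hat c,\hat h^{\sss F},\hat h^{\sss H}$. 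This is what makes $\|Y-MX\|_F^2=\sum_{k=1}^{k_s}\|y_k-Mx_{k-1}\|_2^2$ the natural quantity to minimise; note, however, that the closed form below is its minimiser regardless of whether this model is exact, so only the optimisation itself is needed for the statement.

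Next I would solve the optimisation. Writing $w_i^{\mathsf{T}}$ and $Y_i^{\mathsf{T}}$ for the $i$-th rows of $M$ and $Y$, the objective splits as $\|Y-MX\|_F^2=\sum_{i=1}^{n_h}\|Y_i-X^{\mathsf{T}}w_i\|_2^2$, so each row is an independent ordinary least-squares regression with design matrix $X^{\mathsf{T}}\in\mathbb{R}^{k_s\times n_f}$. Each summand is a convex quadratic in $w_i$ whose gradient is $2(XX^{\mathsf{T}}w_i-XY_i)$, vanishing exactly on the normal equations $XX^{\mathsf{T}}w_i=XY_i$; stacking the rows gives $\hat W_{\sss HF}\,XX^{\mathsf{T}}=YX^{\mathsf{T}}$. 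Provided $XX^{\mathsf{T}}$ is nonsingular this has the unique solution $\hat W_{\sss HF}=YX^{\mathsf{T}}(XX^{\mathsf{T}})^{-1}$, and since the Hessian of each row objective is $2XX^{\mathsf{T}}$, which is then $\succ\bm{0}$, the objective is strictly convex and the critical point is the unique global minimiser, establishing \eqref{eq:form-solution}.

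The remaining, and main, step is to prove that $XX^{\mathsf{T}}\in\mathbb{R}^{n_f\times n_f}$ is invertible under the hypothesis $|\mathcal{V}_{\sss F}|+1\le k_s$, equivalently that $X$ has full row rank $n_f$. The hypothesis supplies $k_s$ observation columns $x_0,\dots,x_{k_s-1}$, strictly more than the dimension $n_f$, so the regression is overdetermined; the full-rank property itself I would argue by either of two routes. Deterministically, before the $\epsilon$-steady pattern is reached the component trajectories $\{(z^i_k)_{k<k_s}:i\in\mathcal{V}_{\sss F}\}$ have not yet collapsed onto the common-velocity line $ck\bm{1}_n+s$, so the pre-steady segment of $\{z_k^{\sss F}\}$ is persistently exciting and $X$ has rank $n_f$. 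Probabilistically, and more robustly, I would condition on $\hat c$ and $\hat h^{\sss F}$: these are computed from observations at times $\ge k_s$ and are therefore independent of the noises $\{\omega_k^{\sss F}\}_{k<k_s}$ entering $X$, so conditionally $X$ equals a fixed matrix plus an $n_f\times k_s$ array of i.i.d.\ $N(0,\sigma^2)$ entries, which has rank $n_f$ with probability one as soon as $k_s\ge n_f$; hence $XX^{\mathsf{T}}\succ\bm{0}$ almost surely. I expect this invertibility/persistent-excitation point to be the only non-routine part of the argument; everything else is the textbook derivation of the multivariate OLS estimator.
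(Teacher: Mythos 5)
Your proposal is correct and follows essentially the same route the paper takes (the paper defers the formal derivation to Theorem 2 of its preliminary work, but its stated ``core insight'' that the range-shrink makes $[Wx]^{\sss H}=W_{\sss HF}x^{\sss F}$, plus the standard OLS normal-equation solution, is exactly your argument). Your treatment of the invertibility of $XX^{\mathsf{T}}$ also matches the paper's own remark, which justifies full rank by the non-steady (persistently exciting) observations before $k_s$ together with the independent random observation noise.
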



Theorem \ref{th:topo-estimator} gives the least squares solution of $W_{\sss HF}$ when $R_c$ is known. 
The core insight is that by the range-shrink strategy, the truncated state $[Wx]^{\sss H}=W_{\sss HF} x^{\sss F}$, which perfectly avoids the influence brought by the unobservable $\mathcal{V}_{\sss F'}$. 
Although the number of feasible observations is limited in practice, Theorem \ref{th:topo-estimator} can be used as the basis for approximating $W_{\sss HF}$ from noisy observations. 

{\color{black}{
\begin{remark}
Note that the invertibility of matrix $(X X^\mathsf{T})$, i.e., $\operatorname{Rank}(X X^\mathsf{T})=n_f$, is guaranteed from two aspects: the non-steady observations and the random observation noises. 
First, the observation matrix $X$ consists of $k_s$ columns of observations before the $\epsilon$-steady pattern is converged. 
In other words, the velocities of the robots do not reach consensus and the state variations of different robots are independent, thus making $\operatorname{Rank}(X )=n_f$ holds, which is a dominant factor. 
Second, the observations $\{ \tilde{z}_{k}^{\sss F} \}$ are corrupted by independent random noises. 
Since the columns in $X$ are calculated by $x_k = \tilde z_{k}^{\sss F} -\hat h^{\sss F}$ and are independently random, according to Sard's theorem in measure theory, the matrix is full-ranked almost surely. 
The above two factors effectively avoid the ill-posedness of the proposed estimator. 
\end{remark}

}}

\subsection{Convergence of the Proposed Estimator}
Next, we focus on the convergence performance of $\hat W_{\sss HF}$ assuming $R_c$ is known. 
Taking the estimation error of $\hat c$ and $\hat h^{\sss F}$ into account, the convergence of $\hat W_{\sss HF}$ is characterized by the following result.

\begin{theorem}[Convergence of $\hat W_{\sss HF}$ with known $R_c$]\label{th:final-error}
Let $P_3(k_s)=1-2 \exp\{-(k_s+n_h)\}$ and suppose $R_c$ is known. 
With probability at least $P_1(L_c) \cdot P_3(k_s)$, 
the error of the topology estimator $\hat W_{\sss HF}(\hat R_c)$ satisfies
\begin{equation}\label{eq:ks_bound}
 \| \hat W_{\sss HF} \!-\!  W_{\sss HF}\| = \bm{O} ( \frac{1}{k_s}) + \bm{o}( \frac{1}{k_s^2}).
\end{equation}
\end{theorem}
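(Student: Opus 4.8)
The plan is to start from the closed-form estimator $\hat W_{\sss HF} = Y X^{\mathsf T}(X X^{\mathsf T})^{-1}$ and expand $Y$ using the exact one-step relation for the shrunken set. By the range-shrink construction, for a robot in $\mathcal{V}_{\sss H}$ all its in-neighbors lie in $\mathcal{V}_{\sss F}$, so the true dynamics give $z_{k+1}^{\sss H} = W_{\sss HF} z_k^{\sss F} + u_k^{\sss H}$, with no unobserved term. Substituting the definitions $x_k = \tilde z_k^{\sss F} - \hat h^{\sss F}$ and $y_k = \tilde z_{k+1}^{\sss H} - \hat h^{\sss H} - \hat c\,\mathbb{I}^{\sss H}$ from \eqref{eq:filtered_observation}, I would write $y_k = W_{\sss HF} x_k + e_k$, where the residual $e_k$ collects three sources: (i) the genuine observation noise $\omega_{k+1}^{\sss H} - W_{\sss HF}\,\omega_k^{\sss F}$, (ii) the filtering error from using $\hat h$ instead of $h$, i.e. a term proportional to $\hat h^{\sss F} - h^{\sss F}$ (equivalently $\Delta_s$-type errors from Theorem~\ref{th:cs-performance}), and (iii) the velocity filtering error proportional to $\Delta_c$. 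Then $\hat W_{\sss HF} - W_{\sss HF} = E X^{\mathsf T}(X X^{\mathsf T})^{-1}$ with $E = [e_0,\dots,e_{k_s-1}]$, so the bound reduces to controlling $\|E X^{\mathsf T}\|$ and $\|(X X^{\mathsf T})^{-1}\|$.

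For $\|(X X^{\mathsf T})^{-1}\|$, I would argue that the $k_s$ pre-steady columns of $X$ span $\mathbb{R}^{n_f}$ with the smallest singular value of $X X^{\mathsf T}$ growing linearly in $k_s$ (the non-consensus transient plus the almost-sure full-rankness noted in the Remark), so $\|(X X^{\mathsf T})^{-1}\| = \bm{O}(1/k_s)$ on an event of probability at least $P_3(k_s) = 1 - 2\exp\{-(k_s+n_h)\}$ — this is where a concentration inequality on the Gram matrix of the noisy columns enters. For $\|E X^{\mathsf T}\|$, I would split $E$ into its zero-mean noise part and its bias part. The noise part contributes a cross term $\sum_k (\omega_{k+1}^{\sss H} - W_{\sss HF}\omega_k^{\sss F}) x_k^{\mathsf T}$, which is a sum of (nearly) martingale-difference matrices and is $\bm{O}(\sqrt{k_s})$ with high probability by a matrix concentration bound; dividing by the $\bm{O}(1/k_s)$ inverse Gram gives an $\bm{O}(1/\sqrt{k_s})$ contribution — but wait: the claimed rate is $\bm{O}(1/k_s)$, so I must be more careful. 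The resolution is that in $X = [x_0,\dots,x_{k_s-1}]$ the columns themselves grow (the transient states are not bounded; under the leader the states drift like $ck$), so $\|X\|_F$ scales like $k_s^{3/2}$ or faster while $\|(XX^{\mathsf T})^{-1}\|$ shrinks correspondingly faster; the right normalization makes the stochastic term $\bm{O}(1/k_s)$ and the deterministic bias term (items (ii)–(iii)), which is proportional to $(\hat h - h,\ \hat c - c)$ and hence itself $\bm{O}(1/\sqrt{L_c})$ and asymptotically vanishing, enters only through $\bm{o}(1/k_s^2)$ after the same normalization. I would make this precise by tracking the exact powers of $k_s$ in $\|X X^{\mathsf T}\|$ and $\|E X^{\mathsf T}\|$ rather than using crude operator-norm bounds.

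The main obstacle I anticipate is exactly this bookkeeping of growth rates: because one column block corresponds to transient (unbounded, drifting) states, the naive $\|AB\| \le \|A\|\|B\|$ splitting loses the factor that upgrades $\bm{O}(1/\sqrt{k_s})$ to $\bm{O}(1/k_s)$. I would handle it by working with the whitened quantity $(X X^{\mathsf T})^{-1/2} X$ (columns of comparable size) and bounding $\|E X^{\mathsf T}(X X^{\mathsf T})^{-1}\| \le \|E (X^{\mathsf T}(X X^{\mathsf T})^{-1/2})\| \cdot \|(X X^{\mathsf T})^{-1/2}\|$, where the first factor is a genuinely normalized martingale sum bounded by $\bm{O}(\sqrt{\log(1/\delta) + n_h})$ via a Bernstein-type inequality (yielding the $(k_s+n_h)$ exponent in $P_3$) and the second factor is $\bm{O}(1/\sqrt{k_s})$. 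Combining, the stochastic part is $\bm{O}(1/\sqrt{k_s})\cdot\bm{O}(\sqrt{k_s})/k_s$-type — again I must verify the arithmetic gives $\bm{O}(1/k_s)$. Finally, I would fold in the previously established error probabilities: the $\hat h^{\sss F}$, $\hat c$ errors hold with probability $P_1(L_c)$ (Theorem~\ref{th:cs-performance}), the Gram concentration with $P_3(k_s)$, and a union bound gives the stated $P_1(L_c)\cdot P_3(k_s)$; substituting their magnitudes shows the bias terms are dominated, leaving $\|\hat W_{\sss HF} - W_{\sss HF}\| = \bm{O}(1/k_s) + \bm{o}(1/k_s^2)$.
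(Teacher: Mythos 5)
Your proposal follows the same skeleton as the paper's proof: you write $Y=W_{\sss HF}X+E$ where $E$ collects the noise part $\Omega_{\sss H}^{+}-W_{\sss HF}\Omega_{\sss F}^{-}$ and the input-filtering error built from $\hat h^{\sss H}-h^{\sss H}$ and $(\hat c-c)\mathbb{I}^{\sss H}$ (the range-shrink step removing any unobserved term, exactly as in the paper), reduce everything to bounding $\|EX^{\mathsf T}\|$ and $\|(XX^{\mathsf T})^{-1}\|$, and appeal to the drift of the states plus Gaussian concentration. The genuine gap is that the decisive bookkeeping of powers of $k_s$ --- which is essentially the whole content of the paper's proof --- is left open, and the figures you do commit to cannot produce the claimed rate. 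The paper's accounting is: since $x_k$ is dominated by $ck\bm{1}$, $\|X\|=\bm{O}(k_s^{3/2})$ and $\|(XX^{\mathsf T})^{-1}\|=\bm{O}(k_s^{-3})$ (asserted from $\beta_1k_s^3 I\preceq XX^{\mathsf T}\preceq\beta_2k_s^3I$); the noise matrices satisfy $\|\Omega\|=\bm{O}(\sigma\sqrt{k_s+n_h})$ with probability $P_3(k_s)$ by the Davidson--Szarek bound (this, not a Gram-matrix lower tail as you assume, is where $P_3$ is spent --- the Gram lower bound is argued deterministically from the drift plus almost-sure full rank); and $\|U_\Delta X^{\mathsf T}\|=\bm{O}((\epsilon+\sigma)k_s^2)$ with probability $P_1(L_c)$. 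Multiplying $\bm{O}(k_s^{2})$ numerators by $\bm{O}(k_s^{-3})$ gives $\bm{O}(1/k_s)$, and the $\sigma^2$-correlation between $\Omega_{\sss F}^{-}$ and $X$ yields the residual $\bm{o}(\sigma^2/k_s^2)$. By contrast, your stated numbers --- $\lambda_{\min}(XX^{\mathsf T})$ growing linearly, i.e.\ $\|(XX^{\mathsf T})^{-1/2}\|=\bm{O}(k_s^{-1/2})$, against a whitened noise factor $\bm{O}(\sqrt{k_s+n_h})$ --- combine to $\bm{O}(1)$, not $\bm{O}(1/k_s)$; your earlier claim that the unnormalized cross term is $\bm{O}(\sqrt{k_s})$ also fails because the columns $x_k$ grow like $k$. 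You notice the inconsistency (``I must verify the arithmetic'') but never resolve it, so the proposal does not establish \eqref{eq:ks_bound}.

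Two further misstatements. First, you claim the filtering bias is proportional to $(\hat h-h,\hat c-c)$, hence $\bm{O}(1/\sqrt{L_c})$ and entering only at $\bm{o}(1/k_s^2)$. Theorem~\ref{th:cs-performance} gives the $\bm{O}(\epsilon/\sqrt{L_c})$ rate only for $\hat c$; the estimate $\hat s^{\sss F}$ (hence $\hat h^{\sss F}$) retains an $\epsilon$-level expected bias and a variance of order $\sigma^2/(2n_f)$ that does not vanish as $L_c\to\infty$. Accordingly, in the paper the bias term contributes $\bm{O}(\epsilon/k_s)+\bm{O}(\epsilon/(k_s\sqrt{L_c}))+\bm{O}(\sigma/k_s)$, i.e.\ at the same leading order as the noise term --- still consistent with the theorem since it sits inside $\bm{O}(1/k_s)$, but not negligible in the way you claim, and your argument for absorbing it into $\bm{o}(1/k_s^2)$ would not survive scrutiny. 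Second, the probability $P_1(L_c)\cdot P_3(k_s)$ in the statement arises from combining the event of Theorem~\ref{th:cs-performance} with the Gaussian operator-norm event; attributing $P_3$ to a Gram concentration both misplaces the randomness and leaves you without a bound on $\|\Omega\|$, which the proof needs. In short, the route is the right one, but completing it requires exactly the $k_s^{3}$-versus-$k_s^{2}$ normalization carried out in Parts~1--3 of the paper's proof.
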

\begin{proof}
The proof is provided in Appendix \ref{apdix:final-error}. 
\end{proof}

Theorem \ref{th:final-error} demonstrates the convergence rate of $\hat W_{\sss HF}$ in terms of $k_s$ in probability. 
Apparently, if the observations before the $\epsilon$-steady pattern are sufficient, 
then $\hat W_{\sss HF}$ will closely approximate the ground truth in a rate of $\frac{1}{k_s}$, satisfying
\begin{equation} \label{eq:final-error}
\Pr\left\{  \mathop{\lim}\limits_{L_c,k_s\to \infty }  \| \hat W_{\sss HF} -  W_{\sss HF}\| =0 \right\}=1. 
\end{equation}
\begin{remark}
\textcolor{black}{Note that since $\hat W_{\sss HF}$ is based on the estimators $\hat c$ and $\hat h^{\sss F}$, the bound of $ \| \hat W_{\sss HF} - W_{\sss HF}\|$ is also related to $\epsilon$, $\sigma$ and $L_c$. }
In the proof of Theorem \ref{th:final-error}, we show that the RHS in (\ref{eq:ks_bound}) is in fact composed of multiple factors, including $\bm{O}( \frac{\epsilon }{k_s\sqrt{L_c}})$, $\bm{O} ( \frac{\epsilon}{k_s})$, $\bm{O} ( \frac{\sigma}{k_s})$ and $\bm{o}( \frac{\sigma^2}{k_s^2})$. 
Hence, we can characterize the bound as a uniform one about $k_s$. 
It is worth noting that, although the estimation errors of $\hat c$ and $\hat h^{\sss F}$ are influenced by $\epsilon$ and $\sigma$, these parts of errors will have a slight influence on the accuracy of $\hat W_{\sss HF}$ as $k_s$ grows. 
\end{remark}

Note that there are some possible techniques to further alleviate the influence of the observation noises, e.g., by de-regularization. 
\textcolor{black}{In this method, the optimization objective is $\sum\nolimits_{k = 1}^{k_s} \| y_k^{\sss B}- {W_{\sss HF}} y_{k-1}^{\sss A} \|^2-\beta\|W_{\sss HF}\|_{F}^2$, where $\beta>0$ and the second negative term is called de-regularization term. 
Deeper investigation towards this direction will be left as future work. }

\subsection{Accuracy Analysis}
It is illustrated in Theorem \ref{th:final-error} that if the interaction range $R_c$ is known, the local topology estimator $\hat W_{\sss HF}$ converges to $ W_{\sss HF}$ asymptotically. 
However, we only have an estimation range of $R_c$, i.e., $[R_c^{lb},R_c^{ub}]$, and different $\hat R_c$ renders different cardinality of $\mathcal{V}_{\sss F}$. 
\textcolor{black}{To analyze the accuracy of the local topology inference under various $\hat R_c$, we explicitly write the local topology estimator as $\hat{W}_{\sss HF}(\hat{R}_c)$, and propose a range approximation algorithm to find appropriate $\hat R_c$. }

\textcolor{black}{First, we use the maximum range $R_c^{ub}$ to determine an auxiliary robot set $\mathcal{V}_{\sss H_0}\subseteq \mathcal{V}_{\sss F}$, which is covered by a concentric circle of $r_a$'s observation range, with radius $R_{h0}$ satisfying
\begin{equation}
R_{h0}=R_f-R_c^{ub}. 
\end{equation} 
Let $\mathcal{V}_{\sss F_0}$ be the set of robots within the concentric circle range with radius $(R_{h0}+\hat{R}_{c})$, and denote $\mathcal{V}_{\sss F'_0}=\mathcal{V}_{\sss F}\backslash\mathcal{V}_{\sss F_0}$. 
Note that here $\mathcal{V}_{\sss H_0}$ is constant and $\mathcal{V}_{\sss F_0}$ will change with $\hat{R}_{c}$. 
Apparently, we have $\mathcal{V}_{\sss H_0}\!\subseteq\!\mathcal{V}_{\sss F_0}\!\subseteq \!\mathcal{V}_{\sss F}$ and $\mathcal{V}_{\sss H_0}\cap\mathcal{V}_{\sss F'_0}=\emptyset$. }
For the robots in $\mathcal{V}_{\sss F'_0}$, $r_a$ will regard that $\hat w_{ij} (\hat{R}_c)=0,~i\in\mathcal{V}_{\sss H_0},j\in\mathcal{V}_{\sss F'_0}$. 
For the robots in $\mathcal{V}_{\sss F_0}$, $\hat W_{\sss{H_0 F_0}}(\hat{R}_c)$ is computed by the OLS estimator. 
Combining the two parts, $W_{\sss{H_0 F}}$ is estimated by 
\begin{equation}\label{eq:h0f}
\hat W_{\sss{H_0 F}}(\hat{R}_c) =\left [Y_{\sss H_0} X_{\sss F_0}^\mathsf{T} ( X_{\sss F_0} X_{\sss F_0}^\mathsf{T})^{-1}, \bm{0}_{|\mathcal{V}_{\sss H_0}|\times |\mathcal{V}_{\sss F'_0}| } \right]. 
\end{equation}
\textcolor{black}{Recall $\hat W_{\sss{H_0 F}}(\hat{R}_c)$ utilizes $k_s$ groups of observations}, and we define the following evaluation function of $\hat R_c$ to describe its influence on $\hat W_{\sss{H_0 F}}(\hat{R}_c)$
\begin{itemize}
\item \textbf{Asymptotic inference bias of $\hat W_{\sss{H_0 F}}(\hat{R}_c)$}:
\begin{equation}\label{eq:asymptotic-bias}
f_w(\hat{R}_c)=  \mathop {\lim } \limits_{ k_s \to \infty } \| \hat W_{\sss{H_0 F}}(\hat{R}_c;k_s)  - W_{\sss{H_0 F}}  \|.  
\end{equation}
\end{itemize}

\begin{theorem}[Inference bias under different $\hat{R}_c$]\label{th:decreasing-error}
The asymptotic inference bias $f_w(\hat{R}_c)$ is monotonically decreasing w.r.t. the inferred range $\hat{R}_c$ in probability, i.e., if $\hat{R}_{c1}\ge \hat{R}_{c2}$, 
\begin{equation}\label{eq:fw-le}
\Pr \left\{ f_w(\hat{R}_{c1}) \le f_w(\hat{R}_{c2}) \right\}=1. 
\end{equation}
Specifically, if $\hat{R}_c\ge R_c$, \textcolor{black}{the estimator $\hat W_{\sss{H_0 F}}(\hat{R}_c)$ is asymptotically unbiased}, i.e., 
\begin{equation}
\Pr \left\{ f_w(\hat{R}_c)=0 \right\}=1. 
\end{equation}
\end{theorem}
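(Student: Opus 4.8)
The plan is to treat the two assertions separately, since the second (asymptotic unbiasedness for $\hat{R}_c\ge R_c$) is essentially a specialization of Theorem~\ref{th:final-error}, whereas the monotonicity carries the real content. For the unbiasedness claim I would argue purely from the interaction constraint \eqref{eq:interaction-constraint}: every $i\in\mathcal{V}_{\sss H_0}$ stays within distance $R_{h0}$ of $r_a$, and any in-neighbour $j$ of such an $i$ must remain within the interaction range, i.e.\ within distance $R_c$ of $i$, throughout the run; hence $j$ stays within $R_{h0}+R_c\le R_{h0}+\hat{R}_c$ of $r_a$, so $j\in\mathcal{V}_{\sss F_0}$. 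Therefore $w_{ij}=0$ for all $i\in\mathcal{V}_{\sss H_0}$, $j\in\mathcal{V}_{\sss F'_0}$, i.e.\ $W_{\sss{H_0 F'_0}}=\bm 0$ and $W_{\sss{H_0 F}}=[\,W_{\sss{H_0 F_0}}\ \ \bm 0\,]$. Then the one-step recursion of $\mathcal{V}_{\sss H_0}$ extracted from \eqref{eq:devide_state} involves only $\mathcal{V}_{\sss F_0}$, so the pair $(\mathcal{V}_{\sss H_0},\mathcal{V}_{\sss F_0})$ satisfies exactly the range-shrink hypotheses of Theorem~\ref{th:final-error}, with $Y_{\sss H_0}X_{\sss F_0}^{\mathsf T}(X_{\sss F_0}X_{\sss F_0}^{\mathsf T})^{-1}$ being the estimator analyzed there; Theorem~\ref{th:final-error} and \eqref{eq:final-error} then give convergence of that block to $W_{\sss{H_0 F_0}}$ with probability one, while the appended zero block already equals $W_{\sss{H_0 F'_0}}$, so $f_w(\hat{R}_c)=0$ almost surely.

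For the monotonicity, fix $\hat{R}_{c1}\ge\hat{R}_{c2}$, hence $\mathcal{V}_{\sss F_0}(\hat{R}_{c1})\supseteq\mathcal{V}_{\sss F_0}(\hat{R}_{c2})$. First I would pass to the $k_s\to\infty$ limit: combining \eqref{eq:devide_state} and the filtering \eqref{eq:filtered_observation} with the strong law of large numbers for the observation noise (the only source of randomness, and the origin of the ``probability one'' qualifier), the estimator converges to the closed form
\[
\hat W_{\sss{H_0 F}}^{\ast}(\hat{R}_c)=\big[\,W_{\sss{H_0 F_0}}+W_{\sss{H_0 F'_0}}B_{\hat{R}_c}\ \ \bm 0\,\big],
\]
where $B_{\hat{R}_c}$ is the limiting least-squares coefficient of the rows $X_{\sss F'_0}$ regressed on the rows $X_{\sss F_0}$, so that $f_w(\hat{R}_c)=\|\,W_{\sss{H_0 F'_0}}[\,B_{\hat{R}_c}\ \ -I\,]\,\|$ is the classical omitted-variable bias. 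Since $\mathcal{V}_{\sss F}$ is finite, $\mathcal{V}_{\sss F_0}(\hat{R}_{c2})$ and $\mathcal{V}_{\sss F_0}(\hat{R}_{c1})$ differ by finitely many robots, so it is enough to prove the inequality for the elementary step in which $\mathcal{V}_{\sss F_0}$ gains a single robot $p$; iterating then yields the general statement, and if the two sets coincide the claim is trivial.

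For that step, write $\mathcal{V}_{\sss F}=\mathcal{A}\cup\{p\}\cup\mathcal{B}$ with $\mathcal{A}=\mathcal{V}_{\sss F_0}(\hat{R}_{c2})$ and $\mathcal{B}=\mathcal{V}_{\sss F'_0}(\hat{R}_{c1})$. Using the Frisch--Waugh partitioned-regression identity I would express $W_{\sss{H_0 F'_0}}[\,B_{\hat{R}_{c1}}\ \ -I\,]$ as a rank-one modification $A\mapsto A-q\,v^{\mathsf T}$ of $A:=W_{\sss{H_0,p\mathcal{B}}}[\,B_{\hat{R}_{c2}}\ \ -I\,]$, where $v^{\mathsf T}$ is (up to sign) the regression-residual functional of $x_k^{p}$ on $x_k^{\mathcal{A}}$ and $q=w_{\cdot p}+W_{\sss{H_0\mathcal{B}}}\rho$, with $\rho$ the coefficient of that residual in the regression of $x_k^{\mathcal{B}}$; moreover $A=w_{\cdot p}v^{\mathsf T}+W_{\sss{H_0\mathcal{B}}}D$ for the appropriate $D$, and correspondingly $A-q\,v^{\mathsf T}=W_{\sss{H_0\mathcal{B}}}(D-\rho v^{\mathsf T})$. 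Thus the whole claim reduces to the spectral-norm inequality $\|A-q\,v^{\mathsf T}\|\le\|A\|$, and this is the main obstacle: a rank-one perturbation does not shrink the operator norm in general, so the proof must exploit the particular alignment exhibited above --- the perturbation exactly deletes the channel routing $w_{\cdot p}$ through $v^{\mathsf T}$ and re-expresses it through the retained columns $W_{\sss{H_0\mathcal{B}}}$ --- together with the non-negativity and row-stochasticity of $W$ and the defining orthogonality (in the limiting second-moment sense) of the residual functional $v^{\mathsf T}$ to $x_k^{\mathcal{A}}$. If a clean spectral-norm bound proves elusive, a fallback is to run the same reduction in the Frobenius norm, where monotonicity follows from the Loewner-order decrease of the residual covariance (Schur complement) upon conditioning on the extra regressor; this still delivers the asymptotic-unbiasedness conclusion and a monotone surrogate for $f_w$.
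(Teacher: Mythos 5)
Your handling of the second claim is sound and is essentially the paper's own argument: when $\hat{R}_c\ge R_c$, the interaction constraint forces every in-neighbour of $\mathcal{V}_{\sss H_0}$ into $\mathcal{V}_{\sss F_0}$, so the omitted block $W_{\sss{H_0 F'_0}}$ vanishes, the truncated recursion is exact, and Theorem \ref{th:final-error} delivers the probability-one limit. Likewise, your closed-form limit $[\,W_{\sss{H_0 F_0}}+W_{\sss{H_0 F'_0}}B_{\hat{R}_c}\ \ \bm{0}\,]$ is the same bias representation the paper uses: its matrix $W_{\Delta_0 \sss{F_2}}$, defined by re-expressing the unobserved contribution through the retained regressors, plays exactly the role of your $W_{\sss{H_0 F'_0}}B_{\hat{R}_c}$, and the paper's $f_w(\hat{R}_{c2})=\|[\,W_{\Delta_0 \sss{F_2}},\,W_{\sss{H_0 F'_2}}\,]\|$ matches your omitted-variable-bias expression.

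The genuine gap is in the monotonicity claim, and you name it yourself: after the Frisch--Waugh reduction everything hinges on the rank-one inequality $\|A-q\,v^{\mathsf{T}}\|\le\|A\|$, which you do not prove and which fails for generic rank-one perturbations; the ``alignment'' together with nonnegativity and row-stochasticity of $W$ is asserted to help but never shown to force the inequality. The Frobenius/Loewner fallback only shows that the residual covariance decreases when a regressor is added, i.e., a monotone surrogate, not \eqref{eq:fw-le} for $f_w$ as defined in \eqref{eq:asymptotic-bias}, whose spectral-norm bias also involves the weight block $W_{\sss{H_0 F'_0}}$ that itself changes between the two radii. So, as written, the first assertion of the theorem is not established. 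For comparison, the paper does not go down this partitioned-regression route at all: having written the asymptotic bias as $\|[\,W_{\Delta_0 \sss{F_2}},\,W_{\sss{H_0 F'_2}}\,]\|$, it concludes monotonicity directly from the inclusion of the omitted in-neighbour sets ($|\mathcal{V}_{\sss F'_1}|\le|\mathcal{V}_{\sss F'_2}|$, the set being empty once $\hat{R}_c\ge R_c$), i.e., fewer omitted in-neighbour columns means a smaller bias term --- a coarser, largely qualitative step that sidesteps, rather than resolves, precisely the norm comparison on which your more ambitious argument gets stuck.
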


\begin{proof}
The proof is provided in Appendix \ref{apdix:decreasing-error}. 
\end{proof}

Theorem \ref{th:decreasing-error} demonstrates the decreasing monotonicity of $f_w(\hat{R}_c)$ in asymptotic sense. 
\textcolor{black}{Note that $\hat{R}_c\ge R_c$ is a sufficient condition to guarantee an asymptotically unbiased $\hat W_{\sss{H_0 F}}(\hat{R}_c)$. }
Despite not knowing the groundtruth $R_c$ and $W_{\sss{H_0 F}}$ in practice, from Theorem \ref{th:decreasing-error} we deduce that $f_w(R_{c}^{ub})=f_w(R_{c})=0$, which indicates that $\hat W_{\sss{H_0 F}}(R_{c}^{ub})$ can be leveraged to replace $W_{\sss{H_0 F}}(R_{c})$ for evaluation. 
Accordingly, we define the empirical bias of $\hat W_{\sss{H_0 F}}(\hat{R}_c)$ as 
\begin{itemize}
\item \textbf{Empirical inference bias of $\hat W_{\sss{H_0 F}}(\hat{R}_c)$}:
\begin{equation}\label{eq:empirical-bias}
f_e(\hat{R}_c)= \| \hat W_{\sss{H_0 F}}(\hat{R}_c)  - \hat W_{\sss{H_0 F}}(R_{c}^{ub}) \|.  
\end{equation}
\end{itemize}

\begin{algorithm}[t]
    \caption{$\text{search}\_\text{suboptimal}\_R_c(R_{c}^{ub},R_{c}^{lb},n_c,n_w)$}
    \label{algo:infer-Rc}
    \begin{algorithmic}[1]
    \REQUIRE{Range $[R_{c}^{lb},R_{c}^{ub}]$, decision threshold $\varepsilon_w$, counting number $n_c$ and stopping threshold $n_w$. }
    \ENSURE{Suboptimal estimation of $R_{c}$.}
    \STATE $\hat{R}_c=(R_{c}^{ub}+R_{c}^{lb})/2$;
    \STATE Determine the subset $\mathcal{V}_{\sss F_0}$ by $R_{f0}=R_{h0}+\hat{R}_c$;
    \STATE Compute $f_e(\hat{R}_c)$ by (\ref{eq:empirical-bias});
	\IF {$ f_e(\hat{R}_c)> \varepsilon_w$}
	{
		\STATE $R_{c}^{lb}=\hat{R}_c$, $n_c=1$;
		\STATE $\text{search}\_\text{suboptimal}\_R_c(R_{c}^{ub},R_{c}^{lb},n_c,n_w)$;
	}
  	\ELSE
  	{
		\STATE $R_{c}^{ub}=\hat{R}_c$, $n_c=n_c+1$;
		\IF {$ n_c \ge n_w $}
		{
			\STATE Return $R_{c}^{ub}$;
		}
	  	\ELSE
	  	{
	  		\STATE $\text{search}\_\text{suboptimal}\_R_c(R_{c}^{ub},R_{c}^{lb},n_c,n_w)$;
	  	}
	  	\ENDIF
  	}
  	\ENDIF
    \end{algorithmic}
\end{algorithm}

Based on (\ref{eq:empirical-bias}), we propose Algorithm \ref{algo:infer-Rc} to obtain a suboptimal estimation of ${R}_c$ from the range $[R_c^{lb},R_c^{ub}]$. 
\textcolor{black}{The key idea of the algorithm is to validate the monotonicity of $f_e(\hat{R}_c)$, and find an appropriate $\hat{R}_c$ after which $f_e(\cdot)$ remains stable. }
Specifically, the classic bisection method is used to speed up the search efficiency, and a decision threshold $\varepsilon_w$ and a stopping threshold $n_w$ are introduced to terminate the process. 
Note that the larger $\varepsilon_w$ and smaller $n_w$ are, the more conservative $\hat{R}_c$ is.

\textcolor{black}{
\begin{remark}
In previous parts, we assumed that the observation noises on each robot are i.i.d. Gaussian noises for simple analysis.  
In fact, this assumption can be relaxed on independent but non-identical cases, i.e., $\mathbb{E} \omega_t \omega_s^{\mathsf{T}}=\delta_{ts}\operatorname{diag}(\sigma_{\omega_1}^2, \sigma_{\omega_2}^2,\cdots,\sigma_{\omega_n}^2)$.  
The key insight is to adopt $\max\{\sigma_{\omega_1}^2, \sigma_{\omega_2}^2,\cdots,\sigma_{\omega_n}^2\}$ as the variance bound for all observation noises in the inference error analysis. 
Consequently, this scaling step will not affect the convergence and asymptotic accuracy of the proposed method. 
\end{remark}}

\subsection{Estimator Design with Its Improved Solution}
With the $\epsilon$-steady pattern parameter $\hat c$, $\hat{h}^{\sss F}$ and interaction range $\hat R_c$ (output of Algorithm \ref{algo:infer-Rc}) determined, we are able to design the unbiased topology estimator of $W_{\sss HF}$ with the maximum number of robots. 
Consequently, the set $\mathcal{V}_{\sss H}$ is in turn specified by $R_{h}=R_{f}+\hat R_{c}$. 
Then, the local topology $W_{\sss HF}$ is estimated by 
\begin{equation}\label{eq:final-W-hf}
\hat W_{\sss{HF}}(\hat R_{c}) = Y(\hat R_c) X^\mathsf{T} ( X X^\mathsf{T})^{-1}. 
\end{equation}
Despite the asymptotic boundedness of the OLS estimator (\ref{eq:final-W-hf}), the proposed method nevertheless can be used as the basis for inferring the local topology when a finite number of observations are available. 

Note that (\ref{eq:final-W-hf}) only utilizes $\hat{R}_c$ to specify the inference scope of $\mathcal{V}_{\sss H}$. 
In fact, $\hat{R}_c$ can be regarded as the prior knowledge that $r_a$ has mastered in the excitation stage to further improve the inference accuracy.  
The key insight is that two robots that are not within range $R_c$ will not receive information from each other.  
Leveraging this as a hard constraint, $\hat W_{\sss HF}$ can be further optimized by solving the following problem 
\begin{subequations} \label{further}
\begin{align}
\mathop {\min }\limits_{{W_{\sss HF}}} ~& \| Y(\hat R_c) - {W_{\sss HF}}X  \|_{\sss \text{Fro}}^2  \\
\text{s.t.}~~&w_{ij}=0,\text{if}~\|\tilde{\mathbf z}^i-\tilde{\mathbf z}^j\|_2>\hat R_c, i\in\mathcal{V}_{\sss H},j\in\mathcal{V}_{\sss F}.
\end{align}
\end{subequations}
Note that (\ref{further}) is a typical constrained linear least squares problem, and can be solved by many mature optimization techniques, e.g., interior-point method \cite{boyd2004convex}.

\textcolor{black}{
Finally, we briefly summarize how the local topology $W_{\sss HF}$ is inferred from noisy observations $\{ \tilde{z}_{k}^{i},i\!\in\!\mathcal{V}_{\sss F}^{a}(k)\}_{k=1}^{k_{end}}$. 
The first step is to determine the constant subset $\mathcal{V}_{\sss F}$ and estimate the input parameters from the observations in steady pattern. 
Then, the interaction range between robots is estimated. 
Utilizing the range-shrink strategy and estimated interaction range, we further determine the appropriate subset $\mathcal{V}_{\sss H}$. 
At last, the local topology is inferred by \eqref{eq:form-solution} and its improved version \eqref{further}, where the input's influence on the non-steady observations $\{ \tilde{z}_{k}^{\sss F} \}_{k=1}^{k_s}$ is filtered. }

{\color{black}{
\subsection{Extensions and Discussions}\label{subsec:extension}
Recall that the topology estimator is obtained by solving $\mathop {\min }\limits_{{W_{\sss HF}}} ~ \| Y - W_{\sss HF} X  \|_{F}^2$. 
In fact, it can be decomposed into inferring the rows of $W_{\sss HF}$ independently, i.e., solving 
\begin{equation}
\mathop {\min }\limits_{{W_{\sss HF}^{[i,:]}}} ~ \| Y^{[i,:]} - W_{\sss HF}^{[i,:]} X  \|^2,
\end{equation}
for all $i\in\mathcal{V}_{\sss H}$. 
Based on this decomposition, we demonstrate how to infer the local topology when $\mathcal{V}_{\sss F}\subseteq\mathcal{V}_{\sss F}^{a}(k)$ does not always hold. 

Note that if there exists $k< k_{end}$ such $\mathcal{V}_{\sss F}\not\subset\mathcal{V}_{\sss F}^{a}(k)$, it indicates that the observation range of $r_a$ does not cover robots in $\mathcal{V}_{\sss F}$ simultaneously. 
Let the starting time that $i\!\in\!\mathcal{V}_{\sss F}$ be
\begin{equation}
k_f^i=\inf\left\{ k_{\ell}: i\in \bigcap\nolimits_{k=k_{\ell}}^{k_{end}} \mathcal{V}_{\sss F}^{a}(k)  \right\}.
\end{equation}
Next, as indicated in \eqref{further}, if $r_j$ is outside the interaction range of $r_i$, then the interaction weight $w_{ij}=0$. 
This property further relaxes the dependence on the observations of $\mathcal{V}_{\sss F}$. 
For an explicit expression, denote by $\tilde{\mathcal{V}}_{\sss F}^{i}$ the robot set that has possible influences on $r_i$, given by 
\begin{equation}
\tilde{\mathcal{V}}_{\sss F}^{i}=\{j: j\in \mathcal{V}_{\sss F}~\text{and}~\|\tilde{\mathbf z}^j_{k_{end}}-\tilde{\mathbf z}^j_{k_{end} } \|_2 \le \hat R_c \}. 
\end{equation}
Apparently, one has $i\in\tilde{\mathcal{V}}_{\sss F}^{i}\subseteq\mathcal{V}_{\sss F}$. 
Recalling the filtered observation variables defined in \eqref{eq:filtered_observation}, 
we permutate the filtered observations of $\tilde{\mathcal{V}}_{\sss F}^{i}$ and the local topology matrix associated with $r_i$ as follows, 
\begin{align}
\tilde{X}_i&=[\tilde{x}_{k_f^i}(i),\tilde{x}_{k_f^i+1}(i),\cdots,\tilde{x}_{k_s-1}(i)] \in \mathbb{R}^{ |\tilde{\mathcal{V}}_{\sss F}^{i}| \times (k_s-k_f^i)}~, \nonumber \\
\tilde{Y}_i&=[{y}_{k_f^i+1}^i,{y}_{k_f^i+1}^i,\cdots,{y}_{k_s}^i] \in \mathbb{R}^{ 1 \times (k_s-k_f^i)}, \nonumber \\
\tilde{W}_{i}&= [ ({ w_{ij}, {j\in\tilde{\mathcal{V}}_{\sss F}^{i}}})_{1\times |\tilde{\mathcal{V}}_{\sss F}^{i}| } , \bm{0}_{1\times (n_f-|\tilde{\mathcal{V}}_{\sss F}^{i}|) } ] \in \mathbb{R}^{ 1 \times |\tilde{\mathcal{V}}_{\sss F}^{i}|}, \nonumber
\end{align}
where $\tilde{x}_k(i)=[x_k^{\ell}]_{\ell\in\tilde{\mathcal{V}}_{\sss F}^{i}}$ is the partitioned part in $x_k$ that corresponds to $\tilde{\mathcal{V}}_{\sss F}^{i}$. 
Based on the above formation, 
the following result is presented to illustrate how to infer the local topology ${W}_{\sss HF}$ row-by-row. 
\begin{corollary}\label{coro:extension}
Given the observations before $\epsilon$-steady time $k_s$. 
For $r_i$, if $|\tilde{\mathcal{V}}_{\sss F}^{i}| \le k_s-k_f^i$, then its associated local topology $\tilde{W}_{i}$ can be uniquely inferred by 
\begin{equation}
\hat{\tilde{W}}_{i}=[\tilde{Y}_i \tilde{X}_{i}^\mathsf{T} ( \tilde{X}_i \tilde{X}_i^\mathsf{T})^{-1},\bm{0}_{1\times (n_f-|\tilde{\mathcal{V}}_{\sss F}^{i}|)}]. 
\end{equation}
\end{corollary}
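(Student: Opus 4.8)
The plan is to reduce Corollary \ref{coro:extension} to a row-wise application of the least squares argument underlying Theorem \ref{th:topo-estimator}. The starting point is the observation, already made in this subsection, that the matrix problem $\min_{W_{\sss HF}} \|Y - W_{\sss HF} X\|_F^2$ decouples into the $n_h$ independent scalar-output regressions $\min_{W_{\sss HF}^{[i,:]}} \|Y^{[i,:]} - W_{\sss HF}^{[i,:]} X\|^2$, one for each $i\in\mathcal{V}_{\sss H}$. The only new ingredient relative to Theorem \ref{th:topo-estimator} is that we additionally impose the hard zero constraints $w_{ij}=0$ for $j\notin\tilde{\mathcal{V}}_{\sss F}^{i}$ (coming from the interaction-range property exploited in \eqref{further}), and that we are allowed to start $r_i$'s regression only from $k_f^i$, since before that time $r_i$ need not lie in the observation range. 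First I would make precise that, for the $\epsilon$-steady filtered model, $y_{k+1}^i = W_{\sss HF}^{[i,:]} x_k + (\text{noise terms})$ reduces to $y_{k+1}^i = \tilde{W}_i\, \tilde{x}_k(i) + (\text{noise terms})$, because all components of $W_{\sss HF}^{[i,:]}$ outside $\tilde{\mathcal{V}}_{\sss F}^{i}$ vanish by the range constraint; this is what licenses replacing $X$ by the reduced regressor matrix $\tilde{X}_i$ and $W_{\sss HF}^{[i,:]}$ by $\tilde{W}_i$.

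Next I would write down the constrained least squares problem for row $i$, namely $\min_{\tilde{W}_i} \|\tilde{Y}_i - \tilde{W}_i \tilde{X}_i\|^2$ over the $(k_s-k_f^i)$ columns indexed from $k_f^i$ to $k_s-1$, and take the normal equations. The stationarity condition gives $\tilde{W}_i \tilde{X}_i \tilde{X}_i^\mathsf{T} = \tilde{Y}_i \tilde{X}_i^\mathsf{T}$, so that $\hat{\tilde{W}}_i = \tilde{Y}_i \tilde{X}_i^\mathsf{T}(\tilde{X}_i \tilde{X}_i^\mathsf{T})^{-1}$ whenever $(\tilde{X}_i \tilde{X}_i^\mathsf{T})$ is invertible; padding with zeros in the coordinates of $\mathcal{V}_{\sss F}\setminus\tilde{\mathcal{V}}_{\sss F}^{i}$ yields exactly the stated $\hat{\tilde{W}}_i=[\tilde{Y}_i \tilde{X}_i^\mathsf{T} ( \tilde{X}_i \tilde{X}_i^\mathsf{T})^{-1},\bm{0}_{1\times (n_f-|\tilde{\mathcal{V}}_{\sss F}^{i}|)}]$. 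Uniqueness of the minimizer is then equivalent to $\tilde{X}_i$ having full row rank $|\tilde{\mathcal{V}}_{\sss F}^{i}|$, which in turn requires at least $|\tilde{\mathcal{V}}_{\sss F}^{i}|$ columns, i.e. $|\tilde{\mathcal{V}}_{\sss F}^{i}| \le k_s-k_f^i$ — precisely the hypothesis of the corollary. The full-rank claim itself I would justify exactly as in the Remark following Theorem \ref{th:topo-estimator}: the columns of $\tilde{X}_i$ are the pre-steady filtered observations $\tilde{x}_k(i) = [\,\tilde z_k^\ell - \hat h^\ell\,]_{\ell\in\tilde{\mathcal{V}}_{\sss F}^{i}}$, which are perturbed by independent Gaussian noise, so by the Sard-type genericity argument the Gram matrix $\tilde{X}_i \tilde{X}_i^\mathsf{T}$ is nonsingular almost surely once the dimension count is met.

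Finally, I would assemble the $n_h$ rows: applying the row-wise estimator to every $i\in\mathcal{V}_{\sss H}$ and stacking the $\hat{\tilde{W}}_i$ (appropriately permuted back into the $\mathcal{V}_{\sss F}$ ordering) reconstructs $\hat W_{\sss HF}$, which proves the corollary as a strict generalization of Theorem \ref{th:topo-estimator} — indeed, taking $k_f^i=0$ and $\tilde{\mathcal{V}}_{\sss F}^{i}=\mathcal{V}_{\sss F}$ for all $i$ recovers the unconstrained full-observation case \eqref{eq:form-solution}. The main obstacle I anticipate is not the optimization step, which is routine, but the bookkeeping around time alignment and index permutation: one must check that, for each $i$, the regressor column $\tilde{x}_k(i)$ and the response $y_{k+1}^i$ are both available for every $k$ in the window $[k_f^i, k_s-1]$, that the range constraint genuinely forces the omitted weights to zero (so no bias is introduced by truncating to $\tilde{\mathcal{V}}_{\sss F}^{i}$ rather than all of $\mathcal{V}_{\sss F}$), and that the zero-padding is consistent with the global ordering of $\mathcal{V}_{\sss F}$. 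Once these alignment details are verified, the estimator identity and its uniqueness follow immediately from the scalar normal equations.
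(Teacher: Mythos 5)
Your proposal is correct and follows essentially the same route as the paper, which simply notes that the corollary is proved exactly as Theorem \ref{th:topo-estimator}: the row-wise least squares decomposition, the reduction of the regressor to $\tilde{X}_i$ via the range-induced zero constraints, the normal equations, and the rank/invertibility argument under the column-count condition $|\tilde{\mathcal{V}}_{\sss F}^{i}| \le k_s-k_f^i$ are precisely the intended (omitted) details. Your additional bookkeeping remarks on time alignment and zero-padding are consistent with, and slightly more explicit than, the paper's treatment.
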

The proof of this corollary is the same as that of Theorem \ref{th:topo-estimator}, and the details are omitted here. 
From Corollary \ref{coro:extension}, the available observation slot for inferring $r_i$'s local topology ${\tilde{W}}_{i}$ is not necessarily the same as that of other robots in $\mathcal{V}_{\sss F}$. 
Besides, $\hat{\tilde{W}}_{i}$ is the optimal estimation of ${\tilde{W}}_{i}$ in the sense of least squares, as long as the observation slot satisfies $|\tilde{\mathcal{V}}_{\sss F}^{i}| \le k_s-k_f^i$. 
Similar to the convergence and accuracy of $\hat W_{\sss HF}$, $\hat{\tilde{W}}_{i}$ enjoys the convergence of $\bm{O} ( \frac{1}{k_s-k_f^i})$ and the asymptotical accuracy when $k_s\to\infty$. 

In summary, although the integrated estimator $\hat W_{\sss HF}$ can be unavailable if robots in $\mathcal{V}_{\sss H}$ occur in $r_a$'s observation range at different moments, 
one can still utilize Corollary \ref{coro:extension} to infer the local topology associated with each $i\in\mathcal{V}_{\sss H}$. 
Finally, the underlying $W_{\sss HF}$ is recovered by appropriately permuting the robot indexes of $\{ \hat{\tilde{W}}_{i}, i\in\mathcal{V}_{\sss H} \}$, and stacking them into one matrix row-by-row.

\begin{remark}
The proposed inference method in this paper, which take the first-order linear formation control as the entry point, also provides insights to tackle some second-order and nonlinear cases. 
Taking the second-order linear model in \cite{cao2010sampled} as an example, 
the major difference here is that the topology matrix to be inferred describes the element-to-element interaction connections between both the positions and velocities of robots. 
The proposed method can be extended to the second-order cases because the global state evolution shares the same linear form as that of the first-order case, with appropriate notations and treatments of the double dimensions for each robot. 
Besides, for a common class of nonlinear models like $ z_{k+1}^{i}=z_{k}^{i}+ \sum\nolimits_{j = 1}^{n} w_{ij}\varphi_{ij}( z_{k}^{j}-z_{k}^{i}-h^{ij})$ (where $\varphi_{ij}(\cdot)$ is the continuous and strictly-bounded nonlinear interaction function satisfying $\varphi_{ij}(y)=0$ if $y=0$), one can still use the proposed linear estimator to infer the underlying binary adjacent topology, combined with popular clustering methods as \cite{santos2019local} does. 
\end{remark}

}}

\begin{figure}[t]
  \centering 
  \setlength{\abovecaptionskip}{0.1cm}
  \includegraphics[width=0.5\textwidth]{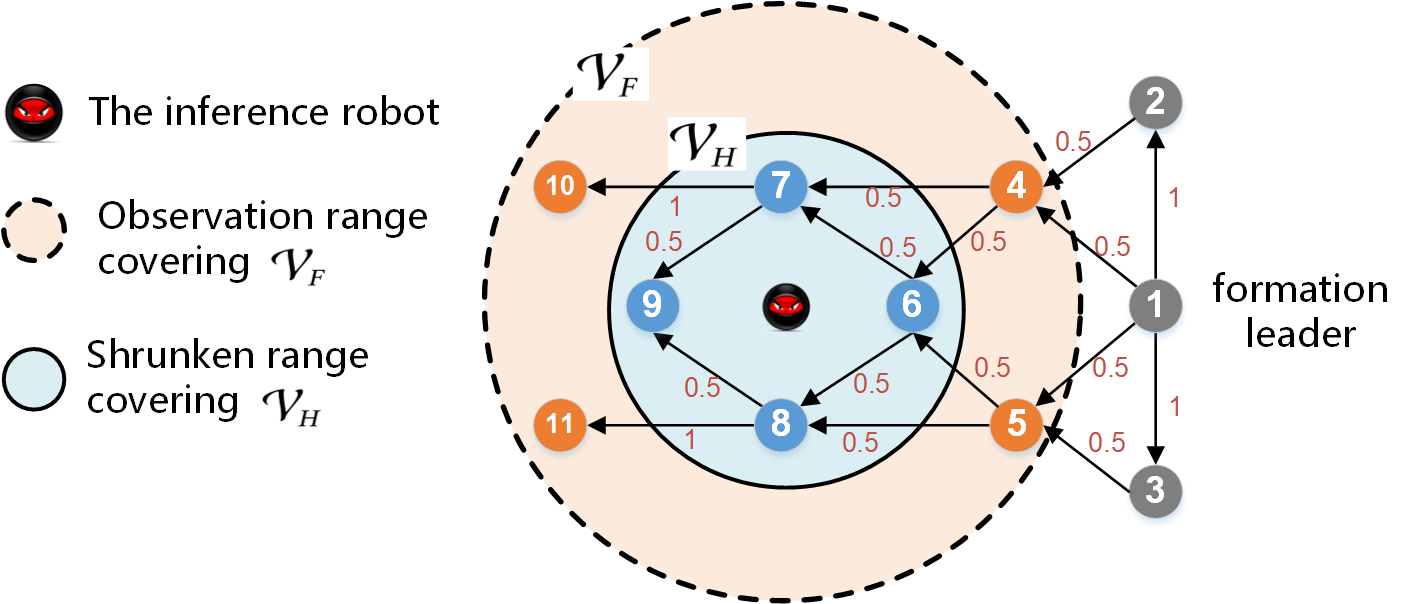} 
  \caption{An MRN of 11 robots and the interaction weights are in red font. Robot 1-3 are unobservable to $r_a$, robot 4-11 constitute the observable set $\mathcal{V}_{\sss F}$, and robot 7-9 constitute the ideal subset $\mathcal{V}_{\sss H}$.} 
  \label{se_example}
  \vspace{-10pt}
\end{figure}

\begin{figure}[t]
\centering
\setlength{\abovecaptionskip}{-0.1cm}
\subfigure[$\sigma=0.05$]{\label{fig:v_noise1}
\includegraphics[width=0.45\textwidth]{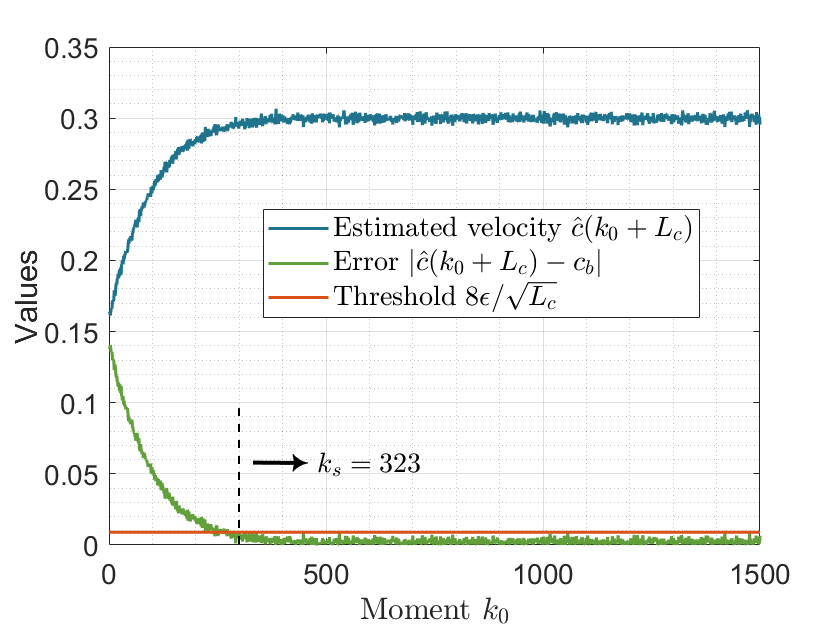}}
\hspace{-0.7cm}
\subfigure[$\sigma=0.1$]{\label{fig:v_noise2}
\includegraphics[width=0.45\textwidth]{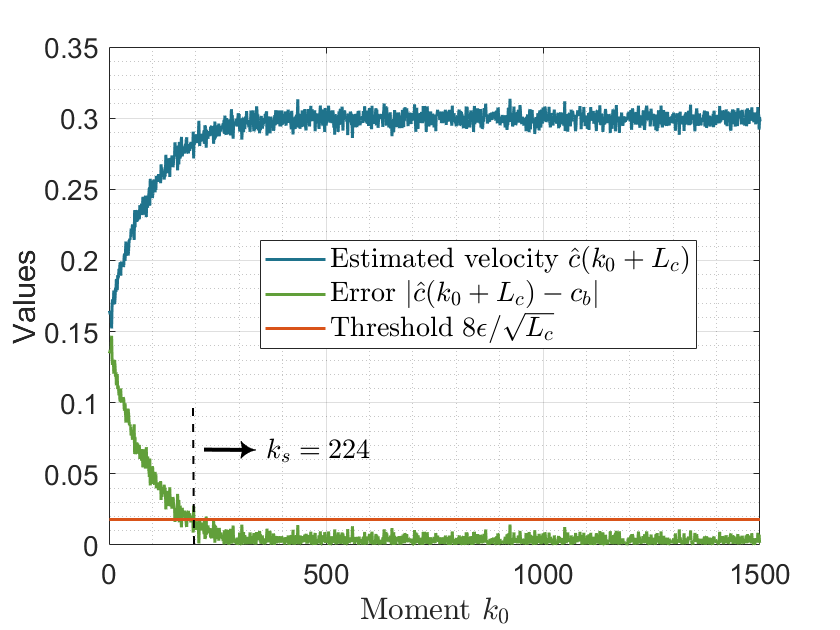}}
\caption{Estimation of the formation speed $\hat c(k,L_c) $. The threshold parameter $\epsilon$ is set as $\epsilon=0.8\sigma$. 
}
\label{fig:v_noise}
\vspace{-10pt}
\end{figure}

\begin{figure*}[t]
\centering
\subfigure[Empirical error $f_e(\hat{R}_c)$]{\label{fig:fe_error_noise}
\includegraphics[width=0.45\textwidth]{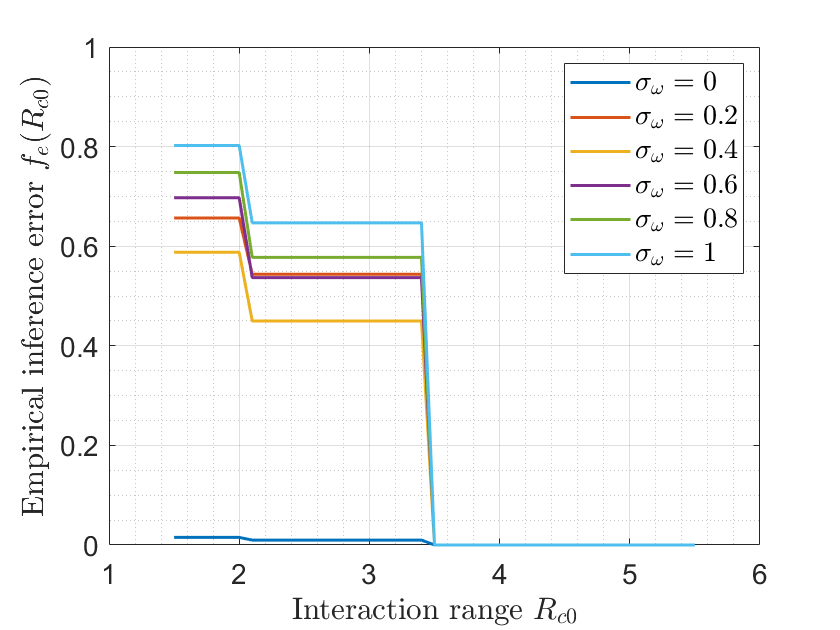}}
\hspace{-0.6cm}
\subfigure[Average error with ground truth]{\label{fig:average_error_noise}
\includegraphics[width=0.45\textwidth]{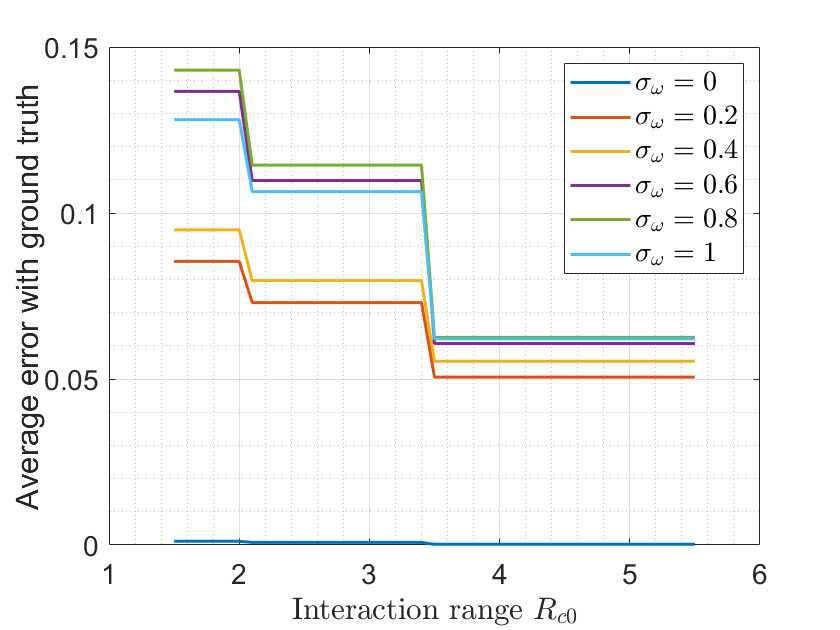}}
\hspace{-0.6cm}
\subfigure[Empirical error $f_e(\hat{R}_c)$]{\label{fig:fe_error_data}
\includegraphics[width=0.45\textwidth]{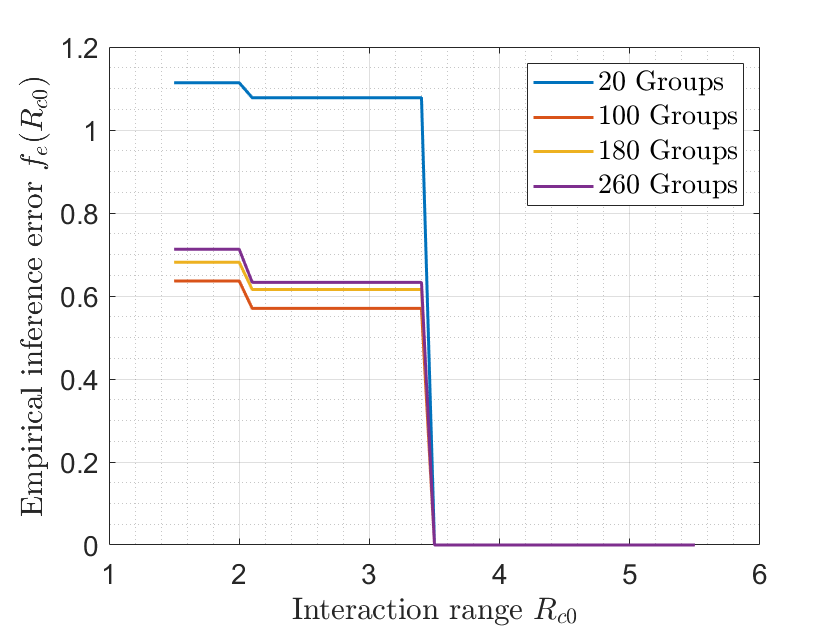}}
\hspace{-0.6cm}
\subfigure[Average error with ground truth]{\label{fig:average_error_data}
\includegraphics[width=0.45\textwidth]{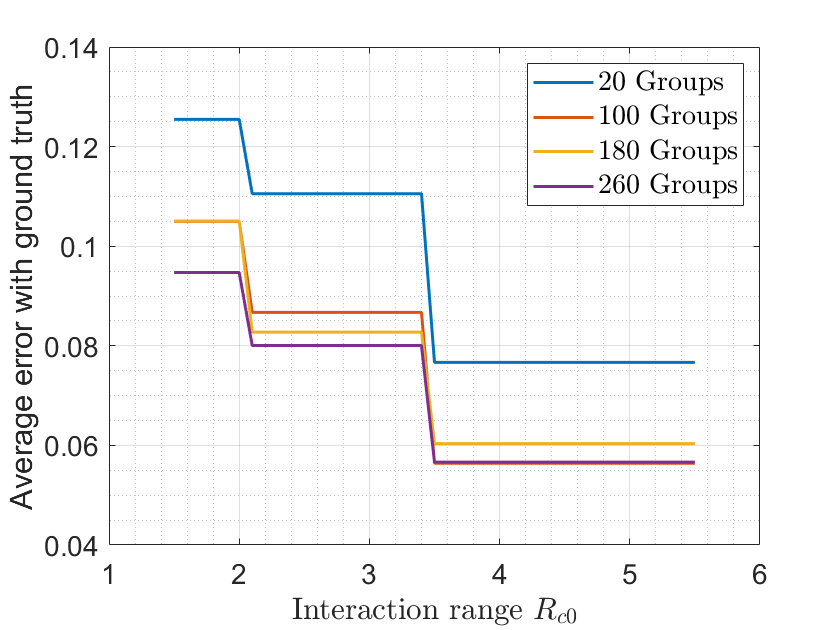}}
\vspace{-5pt}
\caption{Inference error of $\hat W_{\sss{H_0 F}}(\hat{R}_c)$. 
(a)(b): under different noise variance using 200 observations. (c)(d): under different observation amount with $\sigma=0.4$. }
\label{fig:inference_local}
\vspace*{-13pt}
\end{figure*}

\begin{figure}[t]
\centering
\subfigure[Under different noise]{\label{fig:final_noise}
\includegraphics[width=0.45\textwidth]{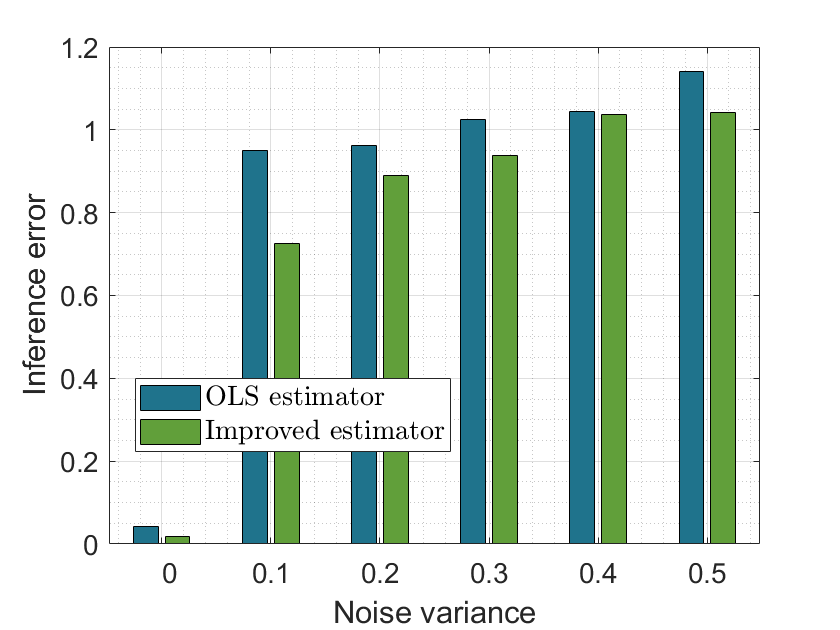}}
\hspace{-0.6cm}
\subfigure[Under different observation amount]{\label{fig:final_data}
\includegraphics[width=0.45\textwidth]{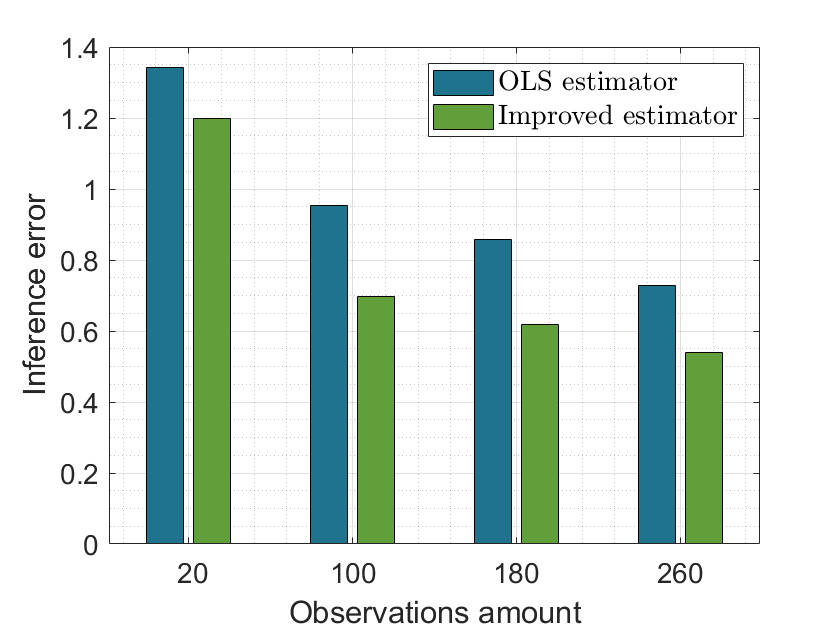}}
\vspace{-5pt}
\caption{Comparisons of inference error of OLS estimator $\hat W_{\sss{HF}}(\hat R_{c})$ and the improved estimator. 200 observations are used in (a), and $\sigma=0.1$ in (b).}
\label{fig:final_inference}
\vspace{-13pt}
\end{figure}

\section{Simulation}\label{simulation}
\subsection{Simulation Setting}
In this section, we conduct numerical experiments to demonstrate the feasibility of inferring the local topology of the MRN, and validate the theoretical results. 
For simplicity, we consider a representative case of an MRN consisting of 11 robots. 
The preset formation shape and the robot set division are shown in Fig.~\ref{se_example}. 
Specifically, robot 1 is set as the leader, and the moving speed in stable stage is set as 0.3m$/$s. 
The observation range of $r_a$ is set as $R_f=9$m, and the interaction and obstacle-detection radius of a formation robot are setting as $R_c=5$m and $R_o=1.5$m, respectively. 
The observation window length $L_c$ is set to be 500. 
In the following parts, we will present the inference results of the steady pattern, the interaction range and the local topology.

\subsection{Simulation Results}
Let us begin with examining the steady pattern estimator (\ref{eq:window-s}) in terms of $\hat c(k,L_c)$. 
For simplicity, we set the threshold parameter $\epsilon=0.8\sigma$, and conduct two groups of experiments using different $\sigma$. 
As shown in Fig.~\ref{fig:v_noise}, when the MRN reaches the steady state, the velocity estimation remains stable. 
Specifically, the red line illustrates how to find the $\epsilon$-steady time by the bound ${8\epsilon}/{\sqrt{L_c}}$ in (\ref{eq:criterion-ks}). 
Apparently, the larger $\epsilon$ is set, the more conservative $\hat c(k,L_c)$ is.

Next, as shown in Fig.~\ref{fig:inference_local}, the inference performance of the interaction range is evaluated. 
Since the active excitation method mainly aims to obtain a lower bound of $\hat{R}_c$, here we omit the simulation process of this stage and directly present the inference error under different $\hat{R}_c$. 
For fair comparisons, Fig.~\ref{fig:fe_error_noise} and \ref{fig:average_error_noise} depict the error curve under noise variance from $0$ to $1$ using 200 observations, while Fig.~\ref{fig:fe_error_data}-\ref{fig:average_error_data} depicts the error curve under observation amount from $20$ to $260$ with $\sigma=0.4$. 
Note that the average error of the inferred topology with the ground truth is computed as $ \| \hat W_{\sss{H_0 F}}(\hat{R}_c)  - W_{\sss{H_0 F}} \|/(|\mathcal{V}_{\sss{H}}||\mathcal{V}_{\sss{F}}|)$. 
As we can see, the empirical inference function $f_e(\hat{R}_c)$ and the average error is generally decreasing with $\hat{R}_c$. 
This corresponds to the result of Theorem \ref{th:decreasing-error} and supports the feasibility of using Algorithm \ref{algo:infer-Rc} to determine a more accurate $\hat R_c$. 
Specifically, the more observations are involved, the smaller the average inference error w.r.t. the ground truth is.

\textcolor{black}{
Then, with $\hat{R}_c=4.5$m, we compare the inference performance without and with the interaction constraint, corresponding to $\hat W_{\sss{HF}}(\hat R_{c})$ and \eqref{further}, respectively. }
Fig.~\ref{fig:final_noise} presents the inference errors under different variances of observation noises, varying from 0 to 0.5. 
Each test is based on 200 observations over the same time window.  
Fig.~\ref{fig:final_data} presents the inference errors under different number of observations, ranging from 20 to 260, with $\sigma=0.1$. 
Note that all the error indexes (Y-coordinate) in the figures describe the absolute deviation between two variables instead of the relative. 
\textcolor{black}{Under the same observation amount, the larger $\sigma$ is, the less improvement can be obtained. 
When $k_s$ and $\sigma$ are not very large, remarkable improvements in the inference performance can be achieved.} 
In addition, as shown in Fig.~\ref{fig:final_data}, the inference error can be further reduced with a larger available observation amount (i.e., $k_{s}$), which corresponds to the conclusion of Theorem \ref{th:final-error}. 

\textcolor{black}{
Finally, we present the performance comparison of the proposed approach with the methods in \cite{matta2018consistent} and \cite{8985069} (denoted as M-1 and M-2, respectively), under the same settings of $\hat{R}_c$, noise variance and observation amount as the last experiment. 
Note that for fair comparisons, we use the filtered observations to implement M-1. 
Fig.~\ref{fig:comp_three_noise} shows the relationships between the inference error and the observation noise variance for all methods. 
Fig.~\ref{fig:comp_three_observation} shows the relationships between the inference error and the observation amount for all methods, under common noise variance $\sigma=0.1$. 
It is clear from the presented results that the proposed method outperforms the other two, which mainly results from the estimation of the formation input and the interaction range. 
We also observe that the reason for M-1 having better performance than M-2 lies in the filtered observations we used. 
More detailed technical comparisons along with some other inference algorithms are summarized in Table \ref{tab:comparison}. 
From this table, it shows that the proposed method has better applicability for the considered problem and inference performance guarantees. 
}

\begin{figure}[t]
\centering
\subfigure[Under different noise]{\label{fig:comp_three_noise}
\includegraphics[width=0.45\textwidth]{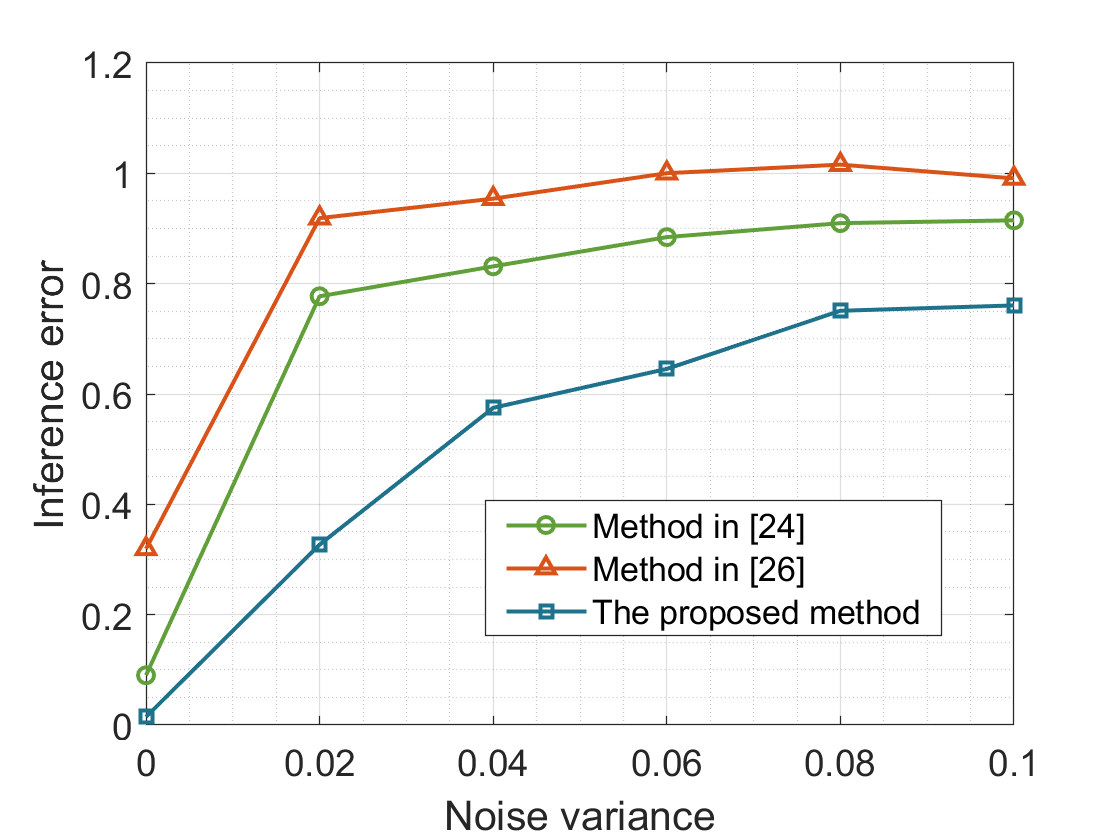}}
\hspace{-0.6cm}
\subfigure[Under different observation amount]{\label{fig:comp_three_observation}
\includegraphics[width=0.45\textwidth]{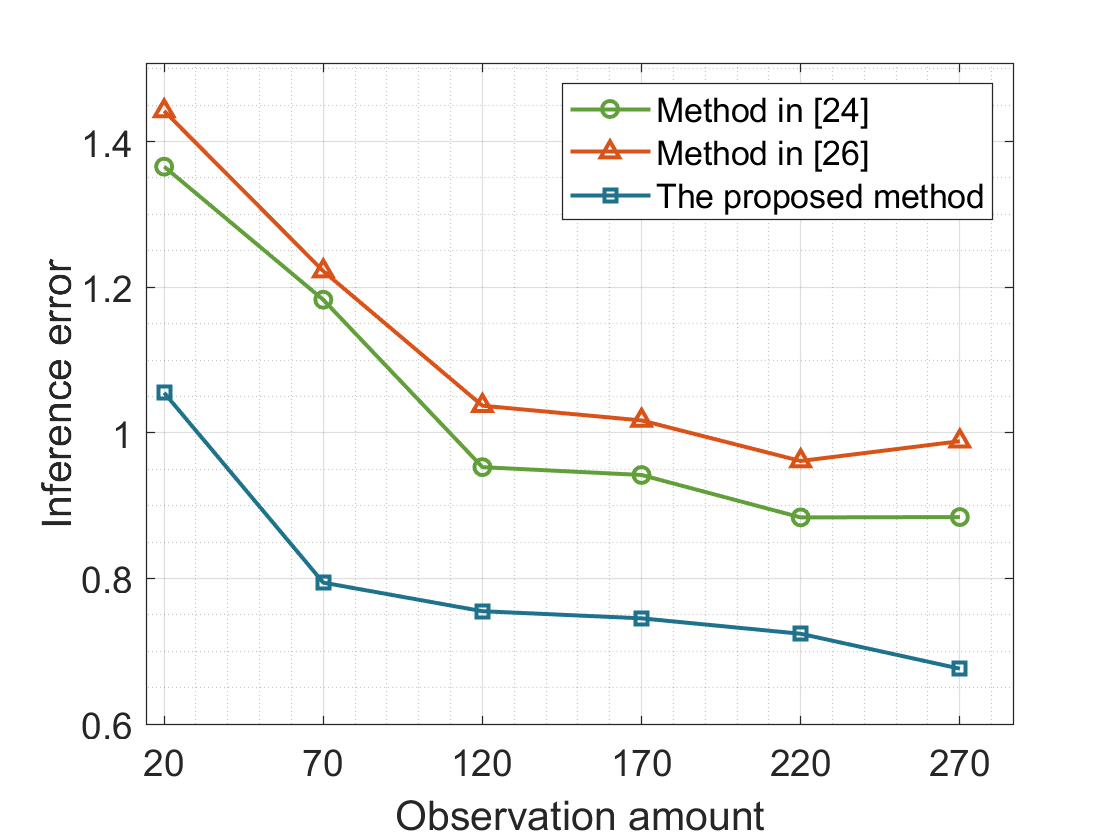}}
\vspace{-5pt}
\caption{\textcolor{black}{Comparisons of the proposed method with methods in \cite{matta2018consistent} and \cite{8985069}. 200 observations are used in (a), and $\sigma=0.1$ in (b).}}
\label{fig:compare_three}
\end{figure}


\begin{table*}[t]
\centering
\tiny
\caption{\label{tab:comparison}Comparisons of the proposed method with other methods} 
\begin{tabular}{p{1.7cm}p{1cm}p{0.9cm}p{1cm}p{1.5cm}cc p{0.9cm}p{1cm}c}
\toprule
\multirow{2}{*}{\textbf{Method}} 
& \multicolumn{2}{c}{ \textbf{Topology Structure}} 
& \multicolumn{2}{c}{ \textbf{Input Consideration} } 
& \multicolumn{2}{c}{\begin{tabular}[c]{@{}c@{}} \textbf{ Local Inference }\end{tabular}}  
& \multirow{2}{*}{\begin{tabular}[c]{@{}c@{}}\textbf{Input}\\ \textbf{Filtering}\end{tabular}} 
& \multirow{2}{*}{\begin{tabular}[c]{@{}c@{}} \textbf{Observation}\\ \textbf{Noise}\end{tabular}}    
& \multirow{2}{*}{\begin{tabular}[c]{@{}c@{}} \ \textbf{Convergence}\\ \textbf{Speed}\end{tabular}} \\
\cmidrule{2-7}
& Undirected    & Directed    & Random & Non-random  & Feasibility & Conditions \\ 
\midrule
Truncated estimator in \cite{matta2018consistent} & \makecell[c]{$\checkmark$}  &  &\makecell[c]{$\checkmark$}  &  & \makecell[c]{$\checkmark$}     &\makecell{Erd\H{o}s-R{\'e}nyi graph\\$ \frac{N_O}{N}\!\in\! (0,1] (N\!\to\!\infty)$$^{^1}$}  &   &  &$\bm{O}(\sqrt{ \frac{ 1}{T} })$$^{^2}$   \\ 
\midrule
Spectral method in \cite{zhu2020network} & \makecell[c]{$\checkmark$}     &      & \makecell[c]{$\checkmark$}  &  &   &     & &  &$\bm{O}(e^{-L})$$^{^3}$  \\
\midrule
Geometric method in \cite{vasquez2018network}  &    &\makecell[c]{$\checkmark$}  & &\makecell[c]{$\checkmark$}  &\makecell{ \textcolor{black}{Binary} \\ \textcolor{black}{judgement$^{^4}$}}   &\makecell{ Non-steady trajectory\\is available}   &  &\makecell[c]{$\checkmark$} &No guarantee\\ 
\midrule
OLS-based method in \cite{8985069}  &    &\makecell[c]{$\checkmark$}  & &\makecell[c]{$\checkmark$}  &\makecell{Feasible \\ if revised$^{^5}$}    &\makecell{\textcolor{black}{Non-steady trajectory}\\\textcolor{black}{is available}} &  &\makecell[c]{$\checkmark$} &No guarantee\\ 
\midrule
\textbf{Our method}  & & \makecell[c]{$\checkmark$}    &  & \makecell[c]{$\checkmark$}  &\makecell[c]{$\checkmark$}  &\makecell{Non-steady trajectory\\is available}  &\makecell[c]{$\checkmark$} &\makecell[c]{$\checkmark$}   &$\bm{O}(\frac{1}{T})$   \\ 
\bottomrule
\addlinespace[0.5ex]
\setlength\tabcolsep{0.5ex}
\end{tabular}
\vspace{-5pt}
\begin{tablenotes}[para]\footnotesize
    \item[1] $N_O$ and $N$ represent the cardinality of the observable subset and the whole node set, respectively. 
    \item[2] $T$ here refers to the number of observations in the non-steady stage, and the system in this reference is asymptotically stable. 
    \item[3] In \cite{zhu2020network} the authors implement $L$ groups of tests over the system, with the same initial state while ending at different moments, and no noise terms are involved.  
    \item[4] The method is based on the geometric characteristics of the robot trajectory. Although not tailored for the local topology inference of MRNs, but we point out it can applied to infer whether the connection between two robots exists. 
	\item[5] The method is originally designed for global topology inference, and can be revised for the local cases if using the idea in this paper. 
\end{tablenotes}
\end{table*}

\section{Conclusion}\label{conclusion}
\textcolor{black}{In this paper, we have studied the problem of inferring the local topology of MRNs under first-order formation control, without the knowledge about formation input and interaction parameters}. 
To overcome the inference challenges brought by the unknown formation input and interaction range, 
\textcolor{black}{we first demonstrated how to determine the available robot subset for inference, considering the set of robots that are within the observation range of the inference robot might change over time. }
Then, we designed $\epsilon$-steady pattern estimators to obtain the input parameters and an active excitation method to estimate the interaction range. 
Then, we proposed a range-shrink strategy to avoid the inference brought by the unobservable robots and presented the local topology estimator. 
The convergence rate and the accuracy of the proposed estimator were proved. 
\textcolor{black}{Extensions on different observation slots for the robots and more complicated control models were also analyzed. 
Finally, extensive simulation tests and comparisons verified the effectiveness of the proposed method.} 
Future directions include i) generalizing the method to more complex formation control cases (e.g., switching topology and nonlinear dynamics); 
ii) investigating the possible attack against the MRNs based on the inferred topology along with its countermeasures.

\appendix 

\subsection{Proof of Theorem \ref{th:cs-performance}}\label{apdix:cs-performance}
\begin{proof}
Based on Lemma \ref{le:ep-convergence}, when the MRN has reached $\epsilon$-steady pattern after $k_0$, the dynamics (\ref{eq:global-system}) is equivalent to 
\begin{align}\label{eq:equivalent-state}
z_{k}^{\sss F}=ck\bm{1}_{n_f}+s^{\sss F}+\bm{\epsilon}_k^{\sss F},
\end{align}
where $\bm{\epsilon}_k=W^{k}z_{0}+\left(\sum\nolimits_{t=0}^{k-1} W^{t}\right)u - ck\bm{1}_n-s $, satisfying $\| \bm{\epsilon}_k \|_1 \le \epsilon $. 
Then, the estimation error of $\hat c(k_0,L_c)$ is given by 
\begin{align}\label{eq:c-error0}
\Delta_c=&\hat c(k_0,L_c)-c = \sum\limits_{t= k_0}^{k_0+L_c-1}  \sum\limits_{i\in\mathcal{V}_{\sss F}}  \frac{(\tilde{z}_{t+1}^{i}-\tilde{z}_{t}^{i}-c) } { {n_f}{L_c} } \nonumber  \\
=& \sum\limits_{t= k_0}^{k_0+L_c-1}  \sum\limits_{i\in\mathcal{V}_{\sss F}} \frac{( \bm{\epsilon}_{t+1}^i - \bm{\epsilon}_{t}^i + \omega_{t+1}^i - \omega_{t}^i  ) } { {n_f}{L_c} } \nonumber \\
=& \sum\limits_{i\in\mathcal{V}_{\sss F}} \frac{( \bm{\epsilon}_{k_0+L_c}^i - \bm{\epsilon}_{k_0}^i + \omega_{k_0+L_c}^i - \omega_{k_0}^i  ) }{ {n_f}{L_c} } . 
\end{align}
For ease notation, define
\begin{equation}\label{eq:epsilon-omega-c}
\left \{
\begin{aligned}
\bar{\epsilon}_c&=\sum\limits_{i\in\mathcal{V}_{\sss F}} ( \bm{\epsilon}_{k_0+L_c}^i - \bm{\epsilon}_{k_0}^i  )/{ ({n_f}{L_c}) }, \\
\bar{\omega}_c&=\sum\limits_{i\in\mathcal{V}_{\sss F}} (\omega_{k_0+L_c}^i - \omega_{k_0}^i  )/{ ({n_f}{L_c}) },  
\end{aligned} \right.
\end{equation}
Then, we only need to prove the upper bound of $\Delta_c = \bar{\omega}_c + \bar{\epsilon}_c . $

Note that $\omega_k^i\sim N(0,\sigma^2)$ i.i.d., and thus $\bar{\omega}_c$ is subject to $N(0,2 \sigma^2/({n_f} {L_c^2}))$. 
{\color{black}{
Based the Gaussian nature, one can obtain the following concentration inequality by finding the optimal Chernoff bound of $\bar{\omega}_c$ (see \textit{example 2.1} in \cite{wainwright2019high}), i.e.,
}}
\begin{equation}\label{eq:wc1}
\Pr\{ |\bar{\omega}_c| \le \frac{2 \epsilon}{\sqrt{L_c}} \} \ge 1 - 2 \exp\{-\frac{{n_f} {L_c}\epsilon^{2}}{ \sigma^{2} }\}=P_1(L_c).
\end{equation}
As for $|\bar{\epsilon}_c| $, since the system is in $\epsilon$-steady pattern, one has 
\begin{equation}\label{eq:epsilonc1}
|\bar{\epsilon}_c|\le 2 \epsilon/L_c \le 2 \epsilon/\sqrt{L_c}. 
\end{equation}
Hence, combining (\ref{eq:wc1}) and (\ref{eq:epsilonc1}) yields that 
\begin{align}\label{eq:accuracy-c0}
\Pr\left \{ |\bar{\epsilon}_c|+|\bar{\omega}_c| \le \frac{4\epsilon}{\sqrt{L_c}} \right\} \!\ge\! P_1(L_c).  
\end{align}
Substituting the absolute inequality $|\Delta_c| \le |\bar{\omega}_c| + |\bar{\epsilon}_c|$ into (\ref{eq:accuracy-c0}), the first statement (\ref{eq:accuracy-c}) is proved. 

Next, consider the estimation error of $\hat{s}^{\sss F}(k;L_c)$. 
Based on (\ref{eq:equivalent-state}), one has 
\begin{equation}\label{eq:bound-s}
\hat{s}^{\sss F}(k_0,L_c)- {s}^{\sss F} = \sum\limits_{t = k_0+1}^{k_0+L_c} (\bm{\epsilon}_t^{\sss F} + \omega_t^{\sss F} - \Delta_c t \bm{1}_{n_f})/L_c,
\end{equation}
Similar with $\Delta_c$, the average estimation of all elements in (\ref{eq:bound-s}) is given by 
\begin{equation}\label{eq:s-error1}
\Delta_s= \frac{\bm{1}_{n_f}^\mathsf{T} {\bm\epsilon}_s }{n_f}+ \frac{\bm{1}_{n_f}^\mathsf{T} {\omega}_s }{n_f} - \frac{\Delta_c(2k_0+L_c+1)}{2},
\end{equation}
where ${\bm\epsilon}_s=\sum\limits_{t = k_0+1}^{k_0+L_c} \bm{\epsilon}_t^{\sss F}/L_c$ and ${\omega}_s=\sum\limits_{t = k_0+1}^{k_0+L_c} \omega_t^{\sss F}/L_c$. 
Taking $\bar{\epsilon}_c$ and $\bar{\omega}_c $ into the third term $\frac{\Delta_c(2k_0+L_c+1)}{2}$, one obtains 
\begin{align}\label{eq:s-error2}
\frac{\Delta_c(2k+L_c+1)}{2}=(\bar{\epsilon}_c + \bar{\omega}_c )(2k_0+L_c+1)/2 . 
\end{align}
Substituting (\ref{eq:s-error2}) into (\ref{eq:s-error1}), then one has $\Delta_s= \bar{\epsilon}_s + \bar{\omega}_s$ where 
\begin{equation}\label{eq:epsilon-omega-s}
\left \{
\begin{aligned}
\bar{\epsilon}_s &= \frac{ \bm{1}_{n_f}^\mathsf{T}  {\bm\epsilon}_s }{n_f} + \frac{(2k_0 +L_c+1)\bar{\epsilon}_c}{2}, \nonumber \\
\bar{\omega}_s &= \frac{ \bm{1}_{n_f}^\mathsf{T}  {\omega}_s  }{n_f} + \frac{(2k_0 +L_c+1)\bar{\omega}_c}{2}. \nonumber
\end{aligned} \right.
\end{equation}
Utilizing $|\bar{\epsilon}_c|\le 2 \epsilon/L_c$ and $\omega_k^i\sim N(0,\sigma^2)$, one deduces that 
\begin{align} | \bar{\epsilon}_s | &\le (2+\frac{2k_0+1}{L_c})\epsilon, \label{eq:final-error-s-1}
 \\
\mathbb{D}[\bar{\omega}_s] & \!=\! \frac{\sigma^2}{n_f L_c^2} \!+\! \frac{\sigma^2}{2n_f}(1+ \frac{(2k_0+1)^2}{L_c^2}+ \frac{4k_0+2}{L_c} ). \label{eq:final-error-s-2}
\end{align}
Combining (\ref{eq:final-error-s-1}) and (\ref{eq:final-error-s-2}), \textcolor{black}{it yields that $\mathbb{E}[ | \bar{\epsilon}_s | + \bar{\omega}_s ]=|\bar{\epsilon}_c|$ and $\mathbb{D}[| \bar{\epsilon}_s | + \bar{\omega}_s ]=\mathbb{E}[\bar{\omega}_s ]$ holds.} 
When $L_c \to \infty $, the second statement (\ref{eq:expectation-variance}) is proved. 
\end{proof}

\subsection{Proof of Theorem \ref{th:outneighbor-excitation}}\label{apdix:outneighbor-excitation}
\begin{proof}
To begin with, similar with (\ref{eq:excitation-j}), the accumulated velocity prediction error $\delta_{k+m,k}^{i}$ is expanded as 
\begin{align}\label{eq:velocity-increment}
\delta_{k+m,k}^{i}= &\sum\limits_{t = 1}^m {w_{ij}^{(t)}u_{k + m - t-1}^{j,e} } - m \Delta_c  \nonumber \\ 
&+{\omega_{k+m}^{i}- \omega_{k+1}^{i}} + {\bm{\epsilon}_{k+m}^{i}-\bm{\epsilon}_{k+1}^{i}}  ,
\end{align}
where ${w_{ij}^{(t)}}$ represents the $t$-power of $w_{ij}$.

Then, it follows that the average velocity prediction error in the $m$-period is given by 
\begin{align}\label{eq:velocity-increment2}
\frac{\delta_{k+m,k}^{i}}{m}= & \frac{1}{m} \sum\limits_{t = 1}^m {w_{ij}^{(t)}u_{k + m - t-1}^{j,e} } - \Delta_c  \nonumber \\ 
&+\frac{ {\omega_{k+m}^{i}- \omega_{k+1}^{i}} }{m} + \frac{\bm{\epsilon}_{k+m}^{i}-\bm{\epsilon}_{k+1}^{i}}{m}. 
\end{align}

Note that $\frac{ {\omega_{k+m}^{i}- \omega_{k+1}^{i}} }{m}$ is subject to $N(0,2\sigma^2/m^2)$. 
Applying the Chernoff concentration inequality, one obtains
\begin{align}
\Pr\left\{ \frac{| {\omega_{k+m}^{i}- \omega_{k+1}^{i}} | }{m}  \!\le\! \frac{2\epsilon}{\sqrt{m}} \right\} \!\ge\! 1 \!-\! 2 \exp\{-\frac{ {m}\epsilon^{2}}{\sigma^{2}}\} \!=\! P_2(m).
\end{align}
Recall that it is proved that $\Pr\left\{ | \Delta_c | \le {4\epsilon }/{\sqrt{L_c}}\right\}\ge P_1(L_c) $ in Theorem \ref{th:cs-performance}, and $ |\frac{\bm{\epsilon}_{k+m}^{i}-\bm{\epsilon}_{k+1}^{i}}{m}|\le \frac{\epsilon}{m} \le \frac{\epsilon}{\sqrt{m}}$ always holds. 
Combining these pieces, if $r_j$ is under no excitation (i.e., $u^{j,e}=0$), it yields that 
\begin{align}\label{eq:ideal-error}
&\Pr\left\{ \frac{ | \delta_{k+m}^{i} | }{m} \le (\frac{4}{\sqrt{L_c}} + \frac{4}{\sqrt{m}})\epsilon \right\} \ge P_1(L_c) \cdot P_2(m) . 
\end{align}
Taking the converse-negative version of the statement (i.e., if $u^{j,e}=0$ then (\ref{eq:ideal-error}) holds), it yields the statement described by (\ref{eq:neighbor-iden}). 
The proof is completed. 
\end{proof}

\subsection{Proof of Theorem \ref{th:final-error}}\label{apdix:final-error}
\begin{proof}
For notational brevity, let $u_{\Delta}= \hat h^{\sss H}- h^{\sss H} + (\hat c -c) \mathbb{I}^{\sss H}$ be the error vector of the filtered input. 
Then, for each pair of $(y_k,x_k)$, we have 
\begin{equation}\label{eq:y-error}
y_{k}=W_{\sss HF} x_{k}+ (W_{\sss HH} - I_{n_h}) u_{\Delta} + \omega_{k+1}^{\sss H} -W_{\sss HF} \omega_{k}^{\sss F}.  
\end{equation}
Let $U_{\Delta}=[u_{\Delta},\cdots,u_{\Delta}]\in\mathbb{R}^{n_h \times k_s}$, $\Omega^{-}=[\omega_{0},\cdots,\omega_{k_s-1}]$ and $\Omega^{+}=[\omega_{1},\cdots,\omega_{k_s}]$. 
Then, all the filtered observations can be compactly written as
\begin{equation}\label{eq:Y-error}
Y=W_{\sss HF} X+ (W_{\sss HH} - I_{n_h}) U_{\Delta} + \Omega_{\sss H}^{+} -W_{\sss HF} \Omega_{\sss F}^{-}.  
\end{equation}
Based on the estimator (\ref{eq:form-solution}), the inference error matrix of $\hat W_{\sss HF}$ is given by 
\begin{align}\label{eq:W-error}
&\| \hat W_{\sss HF} -  W_{\sss HF} \|   \nonumber  \\
=& \|((W_{\sss HH} \!-\! I_{n_h}) U_{\Delta} \!+\! \Omega_{\sss H}^{+} \!-\! W_{\sss HF} \Omega_{\sss F}^{-}) X^\mathsf{T}  (XX^\mathsf{T})^{-1} \|   \nonumber  \\
\le&  (\|(W_{\sss HH} \!-\! I_{n_h}) U_{\Delta} X^\mathsf{T} \| \! + \! \| (\Omega_{\sss H}^{+} \!-\! W_{\sss HF} \Omega_{\sss F}^{-} ) X^\mathsf{T}\| ) \| (XX^\mathsf{T})^{-1} \| .
\end{align}
Next, we will prove that each term of the RHS of (\ref{eq:W-error}) is bounded individually. 

\begin{itemize}
\item \textit{Part 1: Upper Bounding $\| (XX^\mathsf{T})^{-1} \|$. }
\end{itemize}
Recall that the system state can be expanded recursively as 
\begin{align} \label{eq:real-state}
\!\! z_{k+1}&=W^{k+1} z_{0} +  \left(\sum\nolimits_{t=0}^{k} W^{t}\right) L h + \left(\sum\nolimits_{t=0}^{k} W^{t}\right) u_{c} \nonumber \\
&= c(k+1)\bm{1}_n \!+\! W^{k+1} z_{0} + \sum\limits_{t = 0}^k (\sum\limits_{i = 2}^n {\lambda_i^t} {{q_i}{v_i^\mathsf{T}}} ) u. 
\end{align}
Note that the last two terms in the RHS of (\ref{eq:real-state}) is strictly bounded regardless of the moment $k$, and the influence of the independent observation noises will not accumulate during the state evolution. 
Since the term $c(k+1)\bm{1}_n$ takes the dominant role in the state evolution, we can directly characterize $\|X\|$ in terms of the time horizon $k_s$. 

Based on the matrix norm inequality $\|X\|\le \|X\|_{F}\le \sqrt{\min\{n_h,k_s\}}\|X\|$, one easily infers that
\begin{equation}
\frac{\|X\|_{F}}{\sqrt{\min\{n_h,k_s\}}} \le \|X\| \le \|X\|_{F}. 
\end{equation}
Further utilizing $\| z_{t} z_{t}^\mathsf{T} \|_{F}=\bm{O}( t^2)$ and $\sum\nolimits_{t=0}^{k_s} t^2= k_s(k_s+1)(2k_s+1)/6$, one deduces that $ \| \sum\nolimits_{t=0}^{k_s} z_{t} z_{t}^\mathsf{T} \|_{F}= \bm{O}( k_s^3)$. 
Then, it follows that 
\begin{align}\label{eq:ks3}
\| X \| \!=\! \bm{O}( k_s^{3/2}), \| X X^\mathsf{T}\| \!=\! \bm{O}( \| \sum\limits_{t=0}^{k_s} z_{t} z_{t}^\mathsf{T} \| ) \!=\! \bm{O}( k_s^3). 
\end{align}
By (\ref{eq:ks3}), one can always find a group of $\beta_1, \beta_2\in\mathbb{R}^{+}$, such that $\beta_1 k_s^3 I_{n} \preceq X X^\mathsf{T}  \preceq \beta_2 k_s^3 I_{n}$. 
In turn, the inverse matrix of $X X^\mathsf{T}$ satisfies
\begin{equation}
\frac{I_{n} }{\beta_1 k_s^3} \succeq (X X^\mathsf{T})^{-1} \succeq \frac{I_{n}}{\beta_1 k_s^3}, 
\end{equation}
where the invertibility of $X X^\mathsf{T}$ is guaranteed by the i.i.d. noise $\{\omega_k\}_{k=0}^{k_s}$. 
Therefore, the spectral norm of $X X^\mathsf{T}$ is characterized by 
\begin{equation}\label{eq:yy-bound}
\| (XX^\mathsf{T})^{-1}  \|=\bm{O}(\frac{1}{k_s^3}). 
\end{equation}

\begin{itemize}
\item \textit{Part 2: Upper Bounding $\| \Omega_{\sss H}^{+} - W_{\sss HF} \Omega_{\sss F}^{-} X^\mathsf{T} \|$. }
\end{itemize}
Utilizing the concentration measure in Gaussian space \cite{davidson2001local}, given a random matrix $\Omega\in \mathbb{R}^{n_h \times k_s}$ with independent standard normal entries, one has with probability at least $1-2 \exp \{-r^{2} / 2 \}$ 
\begin{equation}
\| \Omega  \| \leq \sqrt{k_s}+\sqrt{n}+r. 
\end{equation}
Let $r=\sqrt{2(k_s+n)}$ and utilize $\sqrt{k_s}+\sqrt{n}\le \sqrt{2(k_s+n)}$, one has with probability at least $1-2 \exp \left(-(k_s+n)\right)$ that $\| \Omega\| \leq 2\sqrt{2(k_s+n)} \sigma$. 
Note that the observation matrix $X$ contains the noise matrix $\Omega_{\sss F}^{-}$, i.e., $\operatorname{Exp}[\Omega_{\sss F}^{-}X^\mathsf{T}]=\sigma^2 I_{n_f}$. 
Therefore, it yields that with probability at least $P_3(k_s)=1-2 \exp \left\{-(k_s+n_h)\right\}$
\begin{align}\label{eq:oo-bound}
\| \Omega_{\sss H}^{+} - W_{\sss HF} \Omega_{\sss F}^{-} X^\mathsf{T} \| &=\bm{O}(\sigma k_s^{ 2}) + \bm{o}( \sigma^2 {k_s}). 
\end{align}

\begin{itemize}
\item \textit{Part 3: Upper Bounding $\| (W_{\sss HH} \!-\! I_{n_h}) U_{\Delta} X^\mathsf{T} \|$. }
\end{itemize}
Recall that $u_{\Delta}= \hat h^{\sss H} + \hat c \mathbb{I}^{\sss H}$, it can be split as 
\begin{equation}
U_{\Delta}=U_{\Delta}^c+U_{\Delta}^{h}, 
\end{equation}
where $ U_{\Delta}^c = [ (\hat c \!-\!c) \mathbb{I}^{\sss H},\cdots,(\hat c \!-\!c) \mathbb{I}^{\sss H}]$ and $ U_{\Delta}^H =[ \hat h^{\sss H} \!-\!h,\cdots,\hat h^{\sss H} \!-\! h]$. 
Then, the upper bound of $\| (U_{\Delta} + \Omega_{\sss H}^{+})X^\mathsf{T} \|$ is given by 
\begin{equation}
\| (U_{\Delta} + \Omega_{\sss H}^{+})X^\mathsf{T} \| \le \|U_{\Delta}^c X^\mathsf{T}\| + \|U_{\Delta}^h X^\mathsf{T}\| + \| \Omega_{\sss H}^{+}X^\mathsf{T} \|. 
\end{equation}
By the conclusion of Theorem \ref{th:cs-performance}, with probability at least $P_1(L_c)=1 - 2 \exp\{-\frac{ {n_f} {L_c}\epsilon^{2}}{\sigma^{2}}\}$, one has $\|U_{\Delta}^c\|\le 4\epsilon \sqrt{n_f k_s/L_c }$ and 
\begin{equation}\label{eq:uc}
 \|U_{\Delta}^c X^\mathsf{T} \|= \bm{O}( \frac{\epsilon k_s^2}{\sqrt{L_c}}). 
\end{equation}
Note that Theorem \ref{th:cs-performance} demonstrates that each term in $U_{\Delta}^h$ is a random variable of $\epsilon$-level mean and $\sigma^2$-level variance, and thus one deduces that with probability at least  $P_3(k_s)$
\begin{equation}\label{eq:uh}
 \|U_{\Delta}^h X^\mathsf{T} \|= \bm{O}( \epsilon k_s^2) + \bm{O}( \sigma k_s^2). 
\end{equation}
Combining (\ref{eq:uc}) and (\ref{eq:uh}), one obtains
\begin{equation}\label{eq:uuu}
 \|U_{\Delta} X^\mathsf{T} \|= \bm{O}( \epsilon k_s^2) + \bm{O}( \sigma k_s^2). 
\end{equation}

Finally, multiplying the terms (\ref{eq:oo-bound}) and (\ref{eq:uuu}) with (\ref{eq:yy-bound}), one has with probability at least $P_1(L_c)\cdot P_3(k_s)$
\begin{equation}\label{eq:4terms}
 \| \hat W_{\sss HF} \!-\!  W_{\sss HF}\| \!=\! \bm{O}( \frac{\epsilon }{k_s\sqrt{L_c}})  \!+\! \bm{O} ( \frac{\epsilon}{k_s}) + \bm{O} ( \frac{\sigma}{k_s})+ \bm{o}( \frac{\sigma^2}{k_s^2}).
\end{equation}
Taking $\epsilon$, $\sigma$ and $L_c$ as fixed constants, (\ref{eq:4terms}) can be further characterized as $\| \hat W_{\sss HF} -  W_{\sss HF}\|= \bm{O} ( \frac{1}{k_s}) + \bm{o}( \frac{1}{k_s^2})$. 
The proof is completed. 
\end{proof}

\subsection{Proof of Theorem \ref{th:decreasing-error}}\label{apdix:decreasing-error}
\begin{proof}
To begin with, suppose $R_c \ge \hat{R}_{c1}\ge  \hat{R}_{c2}$ and the observations are noise-free for ease notation. 
Denote by $\mathcal{V}_{\sss F_2}$ the robot set that is within the range $R_{h0}+\hat{R}_{c2}$, 
\textcolor{black}{and define $\mathcal{N}_{\sss H_0}^{in}=\{j:j\in\mathcal{V}_{\sss F}, i\in\mathcal{V}_{\sss H_0}, w_{ij}>0\}$. 
Then, the filtered observations of $\mathcal{V}_{\sss H_0}$ satisfy}
\begin{equation}\label{eq:nonoise-case}
Y_{\sss H_0} = {  W_{\sss{H_0 F_2}} } X_{\sss{F_2}}  + {  W_{\sss{H_0 F'_2}} } X_{\sss{F'_2}},
\end{equation}
where $\mathcal{V}_{\sss F'_2}=\{\mathcal{N}_{\sss H_0}^{in} \}\cap \{\mathcal{V}_{\sss F}\backslash\mathcal{V}_{\sss F_2}\}$. 
Note that ${  W_{\sss{H_0 F'_2}} } X_{\sss{F'_2}}$ is a ${|\mathcal{V}_{H_0}|}$-dimension vector. 
There exists at least a $|\mathcal{V}_{H_0}| \times | \mathcal{V}_{F_2} |$-dimension matrix $W_{\Delta_0 \sss{F_2}} $, such that 
\begin{equation}\label{eq:change-W}
W_{\Delta_0 \sss{F_2}}   Y_{\sss{F_2}} =  {  W_{\sss{H_0 F'_2}} } X_{\sss{F'_2}}. 
\end{equation} 
Then, substitute (\ref{eq:change-W}) into (\ref{eq:nonoise-case}) and it yields  
\begin{equation}\label{eq:nonoise-case2}
Y_{\sss H_0} = ( {  W_{\sss{H_0 F_2}} } + W_{\Delta_0 \sss{F_2}} )X_{\sss{F_2}}. 
\end{equation}
Taking the observation noises into account, the OLS estimator of $W_{\sss{H_0 F_0}}(\hat{R}_{c2})$ with observations $Y_{\sss H_0}$ and $X_{\sss{F_2}}$ is given by 
\begin{equation}
\hat W_{\sss{H_0 F_2}}(\hat{R}_{c2}) = Y_{\sss H_0} X_{\sss F_2}^\mathsf{T} ( X_{\sss F_2} X_{\sss F_2}^\mathsf{T})^{-1}. 
\end{equation}

Next, applying Theorem \ref{th:final-error} on $\hat W_{\sss{H_0 F_2}}(\hat{R}_c) $, it yields that  
\begin{equation}\label{eq:local-error}
\Pr\{ \| \mathop {\lim } \limits_{ k_s \to \infty } {\hat W_{\sss{H_0 F_2}}(\hat{R}_c) }- {  W_{\sss{H_0 F_2}} } \| =\| W_{\Delta_0 \sss{F_2}}\|\}=1.  
\end{equation}
It is straightforward to infer that, if some in-neighbors of $\mathcal{V}_{\sss H_0}$ are in $\mathcal{V}_{\sss F'_2}$, then $\| W_{\Delta_0 \sss{F_2}}\|> 0$. 
Therefore, for the local topology estimator $\hat W_{\sss{H_0 F}}(\hat{R}_{c2})$, the asymptotic inference bias is given by 
\begin{equation}
f_w(\hat{R}_{c2})=  \left \| \left [ W_{\Delta_0 \sss{F_2}} , W_{\sss{H_0 F'_2}}  \right] \right \|  .  
\end{equation}
For $\hat{R}_{c1}\ge \hat{R}_{c2}$, the cardinal numbers of $\mathcal{V}_{\sss F'_1}$ and $\mathcal{V}_{\sss F'_2}$ satisfy $|\mathcal{V}_{\sss F'_1}| \le |\mathcal{V}_{\sss F'_2}|$, which indicates that $\hat W_{\sss{H_0 F}}(\hat{R}_{c1})$ is less biased from the real $W_{\sss{H_0 F}}$. 
The monotone decreasing property of $f_w(\hat{R}_c)$ is proved. 

Specifically, if $\hat{R}_{c2}\ge R_{c}$, the set $\mathcal{V}_{\sss F'_2}=\{\mathcal{N}_{\sss H_0}^{in} \}\cap \{\mathcal{V}_{\sss F}\backslash\mathcal{V}_{\sss F_2}\}=\emptyset$. 
Finally, it follows that $|\mathcal{V}_{\sss F'_2}|=0$ and $f_w(\hat{R}_{c2})=0$, i.e., $\hat W_{\sss{H_0 F}}(\hat{R}_c) $ is asymptotically unbiased. 
The proof is completed. 
\end{proof}



\begin{IEEEbiographynophoto}{Yushan Li}
(S'19) received the B.E. degree in School of Artificial Intelligence and Automation from Huazhong University of Science and Technology, Wuhan, China, in 2018. 
He is currently working toward the Ph.D. degree with the Dept. of Automation, Shanghai Jiaotong University, Shanghai, China. 
He is a member of Intelligent of Wireless Networking and Cooperative Control group. 
His research interests include robotics, security of cyber-physical system, and distributed computation and optimization in multi-agent networks. 
\end{IEEEbiographynophoto}

\begin{IEEEbiographynophoto}{Jianping He} 
(SM'19) is currently an associate professor in the Department of Automation at Shanghai Jiao Tong University. He received the Ph.D. degree in control science and engineering from Zhejiang University, Hangzhou, China, in 2013, and had been a research fellow in the Department of Electrical and Computer Engineering at University of Victoria, Canada, from Dec. 2013 to Mar. 2017. His research interests mainly include the distributed learning, control and optimization, security and privacy in network systems.

Dr. He serves as an Associate Editor for IEEE Open Journal of Vehicular Technology and KSII Trans. Internet and Information Systems. He was also a Guest Editor of IEEE TAC, International Journal of Robust and Nonlinear Control, etc. He was the winner of Outstanding Thesis Award, Chinese Association of Automation, 2015. He received the best paper award from IEEE WCSP'17, the best conference paper award from IEEE PESGM'17, and was a finalist for the best student paper award from IEEE ICCA'17.
\end{IEEEbiographynophoto}
\vspace{-20pt}


\begin{IEEEbiographynophoto}{Cai Lin}
(F'20) received the M.A.Sc. and Ph.D. degrees (awarded Outstanding Achievement in Graduate Studies) in electrical and computer engineering from the University of Waterloo, Waterloo, Canada, in 2002 and 2005, respectively. Since 2005, she has been with the Department of Electrical and 
Computer Engineering at the University of Victoria, where she is currently a Professor. 
She is an NSERC E.W.R. Steacie Memorial Fellow. Her research interests span several areas in communications and networking, with a focus on network protocol and architecture design supporting emerging multimedia traffic and the Internet of Things. 

She was a recipient of the NSERC Discovery Accelerator Supplement (DAS) Grants in 2010 and 2015, respectively, and the best paper awards of IEEE ICC 2008 and IEEE WCNC 2011. She has co-founded and chaired the IEEE Victoria Section Vehicular Technology and Communications Joint Societies Chapter. 
She has been elected to serve the IEEE Vehicular Technology Society Board of Governors, 2019 - 2021. She has served as an Area Editor for IEEE TRANSACTIONS ON VEHICULAR TECHNOLOGY, a member of the Steering Committee of the IEEE TRANSACTIONS ON BIG DATA (TBD) and IEEE TRANSACTIONS ON CLOUD COMPUTING (TCC), an Associate Editor of the IEEE INTERNET OF THINGS JOURNAL, IEEE TRANSACTIONS ON WIRELESS COMMUNICATIONS, IEEE TRANSACTIONS ON VEHICULAR TECHNOLOGY, IEEE TRANSACTIONS ON COMMUNICATIONS, EURASIP Journal on Wireless Communications and Networking, International Journal of Sensor Networks, and Journal of Communications and Networks (JCN), and as the Distinguished Lecturer of the IEEE VTS Society. She has served as a TPC co-chair for IEEE VTC2020-Fall, and a TPC symposium co-chair for IEEE Globecom’10 and Globecom’13. She is a Registered Professional Engineer in British Columbia, Canada.
\end{IEEEbiographynophoto}
\vspace{-20pt}

\begin{IEEEbiographynophoto}{Xinping Guan}
(F'18) received the B.S. degree in Mathematics from Harbin Normal University, Harbin, China, in 1986, and the Ph.D. degree in Control Science and Engineering from Harbin Institute of Technology, Harbin, China, in 1999. He is currently a Chair Professor with Shanghai Jiao Tong University, Shanghai, China, where he is the Dean of School of Electronic, Information and Electrical Engineering, and the Director of the Key Laboratory of Systems Control and Information Processing, Ministry of Education of China. 
Before that, he was the Professor and Dean of Electrical Engineering, Yanshan University, Qinhuangdao, China. 

Dr. Guan's current research interests include industrial cyber-physical systems, wireless networking and applications in smart factory, and underwater networks. He has authored and/or coauthored 5 research monographs, more than 270 papers in IEEE Transactions and other peer-reviewed journals, and numerous conference papers. 
As a Principal Investigator, he has finished/been working on many national key projects. He is the leader of the prestigious Innovative Research Team of the National Natural Science Foundation of China (NSFC). 
Dr. Guan received the First Prize of Natural Science Award from the Ministry of Education of China in both 2006 and 2016, and the Second Prize of the National Natural Science Award of China in both 2008 and 2018. 
He was a recipient of IEEE Transactions on Fuzzy Systems Outstanding Paper Award in 2008. He is a National Outstanding Youth honored by NSF of China, Changjiang Scholar by the Ministry of Education of China and State-level Scholar of New Century Bai Qianwan Talent Program of China.
\end{IEEEbiographynophoto}

\end{document}